\theoremstyle{definition}
\newtheorem{theorem}{Theorem}
\newtheorem{corollary}{Corollary}
\newtheorem{definition}{Definition}
\newtheorem{lemma}{Lemma}
\newtheorem{proposition}{Proposition}
\newcommand{\rset}{\mathbb{R}}
\newcommand{\val}{\textsf{Val}}
\begin{document}

\title{Random Extensive Form Games and its Application to Bargaining}
\author{Itai Arieli, Yakov Babichenko}

\maketitle

\begin{abstract}
We consider two-player random extensive form games where the payoffs at the leaves are independently drawn uniformly at random from a given feasible set $C$. We study the asymptotic distribution of the subgame perfect equilibrium outcome for binary-trees with increasing depth in various random (or deterministic) assignments of players to nodes. We characterize the assignments under which the asymptotic distribution concentrates around a point. Our analysis provides a natural way to derive from the asymptotic distribution a novel solution concept for two-player bargaining problems with a solid strategic justification. 
\end{abstract}

\section{Introduction}
A subgame perfect equilibrium is one of the fundamental solution concepts in game theory. Its characterization is very simple: A strategy profile is a subgame perfect equilibrium if it constitutes a Nash equilibrium of every subgame of the original game. In extensive form games with perfect information, subgame perfect equilibrium coincides with backward-induction and can be easily determined.     

In this work we study the distribution of the subgame perfect equilibrium outcome in random two-player extensive form games. In these games 
there are two potential natural degrees of randomness that may be considered. The first is random payoffs where the payoffs at the leaves of the tree are randomly drawn.  The second is random assignment of players to nodes, where one of the two decision makers at every node is drawn at random in accordance with a certain distribution. We restrict attention to binary trees with two players and focus on the case where the payoffs are i.i.d.\ uniform random draws over a given domain $C$. Our analysis combines the two potential sources of randomness and relates properties of the subgame perfect equilibrium outcome to a solution concept in the theory of bargaining.

The existing literature on random games focuses on normal form games and studies properties such as the expected number of Nash equilibrium \cite{McLen05,BM05}, the distribution of pure Nash equilibria \cite{Dresher70,Powers90,Stanford95,Papa95,RS00,Taka08} or the maximal number of equilibria \cite{McLen97}. Here we restrict our attention to random extensive form games. We consider binary random games and study the asymptotic properties of the subgame perfect equilibrium outcome when the depth is increasing. The fundamental natural question that arises is under which condition the limit exists and whether in some sense the subgame perfect distribution \emph{concentrates} around a certain point, i.e., whether the subgame perfect equilibrium outcome is close to a certain point with probability close to 1, for sufficiently deep trees. Such a concentration point, if exists, may serve as a benchmark solution for a bargaining problem with $C$ as its feasible set. Indeed this point has quite solid strategic justification: For \emph{almost all}\footnote{By ``almost all" we mean in the context of the standard Lebesgue measure (when we refer to games as points in $\mathbb{R}^k$ for the correct dimension $k$). Note that analyzing probability under i.i.d. uniform distribution is equivalent to analyzing the Lebesgue measure.} strategic interactions of a quite natural type with outcomes in $C$, the solution of the interaction will be close to this concentration point, i.e., for large enough perfect information games over binary trees (with a particular assignment of players to control the nodes) and with payoffs in $C$. Hence we are interested in characterizing the cases where concentration occurs, and these cases will induce interesting incites into bargaining theory.

Our main theorem introduces the \emph{random extensive form} bargaining solution that naturally arises from random extensive form games (see Section \ref{section:atb}). We characterize our solution concept by introducing a sequence of increasing random extensive form games for which concentration around a point holds. This point defines our solution concept and satisfies the three core axioms of a standard solution concept in bargaining: Symmetry, efficiency and scale invariance.

\subsection{Application to Bargaining}\label{sec:in-bar}
Starting with Nash~\cite{Nash50} in 1950, the bargaining problem has been widely studied from different perspectives. Two complementary trusts of the bargaining theory are the \emph{axiomatic approach} and \emph{implementation theory}.
The axiomatic approach aims to understand what should be the solution of a bargaining problem by introducing basic properties that it should satisfy. However, this approach provides no strategic reasoning for how this solution arises. The so-called ``Nash program" aims to support solutions in a non-cooperative framework. This is exactly where implementation theory comes to play, where the goal is to design a game (i.e., a bargaining mechanism) whose equilibrium is the solution.

Many axioms have been suggested in the context of bargaining. Among the suggested axioms, the \emph{standard} axioms that are satisfied by most of the classical solutions (e.g., Nash solution \cite{Nash50}, Kalai-Smorodinski solution \cite{KS75}, and others) are Pareto efficiency, Symmetry, and Invariance with respect to positive affine transformations. The terminology of Kalai-Smorodinsky \cite{KS75} (see also \cite{Troc04}), defines a solution to be \emph{standard} if it satisfies these three axioms. Our focus will be on implementing a standard solution.

Implementation theory in bargaining, and in particular implementation in subgame perfect equilibrium, has been widely studied. Several mechanisms \cite{Moulin84,BRW86,Howard92,Anb93,Miya02} have been suggested for implementing standard bargaining solutions. All these mechanisms are based on the idea that players sequentially offer an allocation that should be approved by the other players. 

We take a different approach to the implementation problem, and we pose the question: Can a random extensive form game serve as an implementation to a standard bargaining solution? Theorem \ref{theo:bar} constitutes a positive answer to this question. This approach differs from the above \cite{Moulin84,BRW86,Howard92,Anb93,Miya02} in the following aspects. 

First, the designed game is \emph{not} based on offers and approvals. Second, it has minimalistic requirements about the abilities of the designer: It is sufficient that the designer will be able to draw outcomes from the bargaining set uniformly at random (as outcomes of the game) in order to design a game whose equilibrium outcome is (close to) a standard solution. Third, our result, in some sense, answers the following type of questions: How hard is it to implement a standard solution? How structured should the designed game be? Our result can be interpreted as follows, \emph{almost all} binary choice large enough extensive form games (with a particular random structure of the decision maker over the nodes) implement a standard bargaining solution.

\subsection{Outline of the Results}\label{sec:out-res}
For ease of clarity of the presentation, throughout the paper we focus on a specific simple setting. As shown in Comment 1 of Section \ref{sec:discussion}, our analysis holds in much more general settings.

We let $C\subset\rset^2$ be a compact convex set\footnote{Convexity of the set $C$ is not essential; see Comment 1 of Section \ref{sec:discussion}.}. 
%We denote by 
%\begin{align*}
%Par(C):=\{c\in C: \text{ There is no } c' \text{ such that } c_1<c'_1 \text{ and } c_2<c'_2 \}.
%\end{align*}
%the \emph{Pareto efficient boundary} of $C$. We focus on sets $C$ such that  $Par(C)$ is also simply connected. 
Let $\mu_0$ be the uniform distribution over\footnote{Our analysis holds for every non-atomic probability distribution $\mu_0$ with positive density over $C$; see Comment 1 of Section \ref{sec:discussion}.} $C$. The game is played over a complete binary tree of depth $n$. The $2^n$ payoffs at the leaves are random independent draws in accordance with $\mu_0$.

We consider two benchmark models of assigning players to control nodes in the tree:
\begin{enumerate}
\item \emph{The alternating game.} Player 1 controls all nodes at depths $n-1,n-3,n-5,...$ Player 2 controls all nodes at depths $n-2,n-4,n-6,...$. Such a game (with random payoffs) will be denoted by $A_n$.
\item \emph{The random controlling player game.} Each Player $i=1,2$ controls each node with probability $\frac{1}{2}$, where the controlling player at each node is drawn independently across the nodes and independently of the payoffs. Such a game (with random controlling players and random payoffs) will be denoted by $R_n$.
\end{enumerate}
In Section \ref{sec:bar} we consider additional assignments of players to nodes, which combines these two models.

For both players $i=1,2$, both random games $A_n$ and $R_n$ have no two equal payoffs for player $i$ with probability one. Therefore, both $A_n$ and $R_n$ have almost surely unique subgame perfect equilibrium. We denote by $\val(A_n)$ and $\val(R_n)$ the random variable of the \emph{games' value}, which is the payoff profile at the unique subgame perfect equilibrium. 
%depends on the players assignment. 
We are interested in the asymptotic properties of the values $\val(A_n)$ and $\val(R_n)$ when $n\rightarrow \infty$. In particular are interested in the phenomena of \emph{concentration of the value} which holds if there exists a point $c\in C$ such that the probability that $\val(T_n)$ lies in a given neighborhood of
$c$ goes to one as the depth of the tree goes to infinity. Formally, 
\begin{definition}
For a sequence of random games $(T_n)_{n=1}^\infty$, we say that $\val(T_n)$ \emph{concentrates around the point} $c\in R^2$ if the sequence $\val(T_n)$ weakly converges to the Dirac measure on the singleton $c$, i.e., $\val(T_n) \Rightarrow \delta_c$.
%Equivalently, for every $\varepsilon>0$ there exists $n_0$ such that for every $n>n_0$ holds
%\begin{align}\label{eq:conv-def}
%Pr(||\val(T_n)-c||_\infty>\varepsilon)<\varepsilon.
%\end{align}
\end{definition}

%In the spirit of Rubinstein alternating offers model for Bargaining we consider first the alternating model where player $1$ controls all nodes of depth $n$, player $2$ controls all nodes of depth $n-1$ and so forth.   

We start our analysis with the zero-sum case, namely $C=\{(x,-x)|0\leq x\leq 1\}$. The results for random zero-sum games are as follows:
\begin{itemize}
\item For the alternating game, $\val(A_n)$ concentrates around the point $(b,-b)$, where $b=\frac{\sqrt{5}-1}{2}$ (Theorem \ref{th:zs}).
\item For the random controlling player model, however, we do not have a concentration result. In fact, we have $\val(R_n)=\mu_0$, i.e., the value is distributed according to the distribution of the initial payoffs for all $n$ (Proposition \ref{pro:zs}).  
\end{itemize}

Then we turn to the non-zero sum case where $C$ is a convex set with non-empty interior. The results for this case are as follows:
\begin{itemize}
\item For the alternating game, $\val(A_n)$ concentrates around a Pareto efficient point (Theorem \ref{th:gd}).
\item For the random controlling player game, although we do not have a concentration around a point, we do have a concentration of $\val(R_n)$ around the Pareto efficient boundary (Proposition \ref{pro:gen}).
\end{itemize}

Our next target is to examine whether random extensive form games can serve as an implementation to a bargaining solution. 
The set $C$ can be considered as the feasible set of alternatives. A solution concept assigns a unique point for every such feasible set. Therefore, the first property that should be satisfied for such a random game is
\begin{enumerate}
\item Concentration of the value around a point.
\end{enumerate}
Another two natural properties that are common to most solution concepts in bargaining (the \emph{standard axioms}) are:
\begin{enumerate}
\setcounter{enumi}{1}
\item Pareto efficiency: The unique solution should be Pareto efficient.
\item Symmetry: For every symmetric sets $C$ the solution should be symmetric. 
\end{enumerate}

By considering the results mentioned so far, we see that none of the suggested random extensive form games enjoys all three properties. The alternating game concentrates around a Pareto efficient point (Theorem \ref{th:gd}). However, the concentration point does not satisfy symmetry (Theorem \ref{th:zs}); Player 1 who acts last has an advantage. The random controlling player game is symmetric by nature; however it fails to satisfy concentration (Proposition \ref{pro:zs}).

It turns out that a random extensive form game, which satisfies all the above desirable properties can be obtained by a ``hybrid" of the two discussed cases (the alternating and the random). In the \emph{hybrid game} all nodes at depth $n-k$ for odd $k$ are controlled by Player $1$ with probability $\frac{1}{2}+\varepsilon$ and by Player $2$ with the remaining probability $\frac{1}{2}-\varepsilon$. All nodes at depth $n-k$ for even $k$ are controlled by Player $2$ with probability $\frac{1}{2}+\varepsilon$ and by Player $1$ with probability $\frac{1}{2}-\varepsilon$. Such a game (with random controlling players and random payoffs) will be denoted by $H^\varepsilon_n$. Note that $H^{1/2}_n=A_n$ and $H^0_n=R_n$.
We show in Theorem \ref{theo:bar} that for small values of $\varepsilon$ the hybrid game $H^\varepsilon_n$, indeed (approximately) satisfies the above-mentioned desirable property:
\begin{enumerate}
\item For every fixed $\varepsilon>0$, the value of $H^\varepsilon_n$ concentrates around a point.
\item For every fixed $\varepsilon>0$, the concentration point of $\val(H^\varepsilon_n)$ is Pareto efficient.
\item For a symmetric set $C$, given $\delta>0$ we can set $\varepsilon>0$ such that the concentration point of $\val(H^\varepsilon_n)$ will be $\delta$-close to the diagonal (i.e., $\delta$-close to the symmetric solution).
\end{enumerate}

\section{Zero sum games}

We first consider the case of random zero-sum games, where $C=conv\{(0,0),(1,-1)\}$, i.e., Player 1's payoff is uniformly distributed in $[0,1]$. For convenience of notations, along this section we drop the payoff of player 2, which is equal to minus player's 1 payoff. 

\subsection{Alternating game}
The following theorem states that the value of the alternating zero-sum game converges to the golden ratio minus one.

\begin{theorem}\label{th:zs}
$\val(A_n)$ concentrates around point $b$ where $b=\frac{\sqrt{5}-1}{2}\approx 0.62$.
\end{theorem}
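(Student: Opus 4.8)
The plan is to follow the cumulative distribution function (CDF) of the node values up the tree, exploiting the fact that sibling subtrees are disjoint and therefore carry independent values. Let $F_k$ denote the common CDF of the value at a node $k$ levels above the leaves, so that $F_0(x)=x$ on $[0,1]$ (the uniform CDF) and $\val(A_n)$ has CDF $F_n$. The parents of the leaves (level $1$) belong to Player~$1$, who maximizes; since the maximum of two independent copies has CDF equal to the square, $F_1=F_0^2$. The next level belongs to Player~$2$, who minimizes, and the minimum of two independent copies has CDF $p\mapsto 1-(1-p)^2$, so $F_2(x)=1-(1-F_1(x))^2$; thereafter the levels alternate $\max,\min,\max,\dots$.

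First I would collapse two consecutive levels into a single scalar map. Fixing $x$ and writing $p=F(x)\in[0,1]$, a max step followed by a min step sends $p$ to $g(p)=1-(1-p^2)^2=2p^2-p^4$, so that $F_{2k}(x)=g^{(k)}(x)$, the $k$-fold iterate of $g$ started at $x$. The next step is to analyze the one-dimensional dynamical system $p\mapsto g(p)$ on $[0,1]$. Since $g(p)-p=p(1-p)(p^2+p-1)$, the fixed points in $[0,1]$ are exactly $0$, $1$, and the interior root $b=\frac{\sqrt5-1}{2}$ of $p^2+p-1=0$. From $g'(p)=4p(1-p^2)$ together with the identity $1-b^2=b$ one gets $g'(b)=4b^2=4(1-b)>1$, so the interior fixed point $b$ is repelling while $0$ and $1$ are attracting.

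The heart of the argument is to globalize this local picture. As a composition of increasing maps, $g$ is strictly increasing on $[0,1]$, and the factorization above shows $g(p)<p$ on $(0,b)$ and $g(p)>p$ on $(b,1)$. Hence from any starting point the iterates $g^{(k)}$ are monotone and bounded, so they converge to a fixed point, and the sign information forces $g^{(k)}(x)\to 0$ for every $x\in[0,b)$ and $g^{(k)}(x)\to 1$ for every $x\in(b,1]$. In terms of CDFs this says $F_{2k}(x)\to 0$ when $x<b$ and $F_{2k}(x)\to 1$ when $x>b$.

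Finally I would remove the restriction to even depth and translate back into the concentration statement. A single extra max step $p\mapsto p^2$ or min step $p\mapsto 2p-p^2$ fixes both $0$ and $1$, so the odd-level CDFs obey the same limits; thus $F_n(x)\to\mathbf{1}[x>b]$ for every $x\neq b$ and for both parities of $n$. This is precisely pointwise convergence of the CDFs at every continuity point of the CDF of $\delta_b$, whence $\val(A_n)\Rightarrow\delta_b$, i.e.\ concentration around $b$. I expect the main obstacle to be the global dynamics step — rigorously establishing that the basins of attraction of $0$ and $1$ are exactly $[0,b)$ and $(b,1]$ — rather than the fixed-point algebra, which is routine; one must also check that the sibling subtree values are genuinely independent (needed for the squaring identities), which holds because distinct leaves carry independent payoffs.
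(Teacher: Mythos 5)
Your proposal is correct and follows essentially the same route as the paper: reduce backward induction to the CDF recursion, collapse a max--min pair of levels into the scalar map $2p^2-p^4$, identify its fixed points $0,b,1$ and the sign of $g(p)-p$ on each interval, deduce monotone convergence of the iterates to $0$ on $[0,b)$ and to $1$ on $(b,1]$, and pass from even to odd depths and then to weak convergence. The extra touches (the factorization $g(p)-p=p(1-p)(p^2+p-1)$ and the derivative check $g'(b)>1$) are consistent refinements of the paper's argument rather than a different method.
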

\begin{proof}
We denote $\mu_n:=\val(A_n)$. By the backward induction procedure we can deduce a recursive formula for the sequence of measures $(\mu_n)_{n=1}^\infty$. The value of a game of depth $n+1$ is a function of the values of the two subgames of depth $n$ (which correspond to the two possible actions of the player that controls the root). Note that the two values of the two subgames are two independent draws according to $\mu_{n}$. We denote by $\min(\nu,\nu)$ and $\max(\nu,\nu)$ the minimum and the maximum of two independent draws according to $\nu$, and we have the following recursive formula
\begin{align*}
\mu_{n+1}=
\begin{cases}
\max (\mu_{n},\mu_{n}) & \text{ if } n \text{ is even} \\
\min (\mu_{n},\mu_{n}) & \text{ if } n \text{ is odd}
\end{cases}
\end{align*}
because Player 1 will choose the maximal value (among the two), and Player 2 will choose the minimal.
%The payoffs at depth $n$ are also the values of a game of depth $0$ and are distributed according to the uniform distribution; i.e., $\mu_0=Un[0,1]$. The payoffs at depth $n-1$ are also the values of a game of depth $1$ and are distributed according to the distribution $\mu_1=\max(\mu_0,\mu_0)$ because player 1 acts at depth $n-1$. The payoffs at depth $n-2$ are also the values of a game of depth $2$ and are distributed according to the distribution $\mu_2=\min(\mu_1,\mu_1)$ because player 2 acts at depth $n-1$. Using this argument we deduce that
%\begin{align*}
%\mu_{k+1}=
%\begin{cases}
%\max (\mu_{k},\mu_{k}) & \text{ if } k \text{ is even} \\
%\min (\mu_{k},\mu_{k}) & \text{ if } k \text{ is odd}
%\end{cases}
%\end{align*}

%Given $\varepsilon>0$, by definition of concentration around the point $b$ (see equation \eqref{eq:conv-def}), it is sufficient to prove that for large enough $n$ holds $\mu_n([0,b-\varepsilon])< \frac{\varepsilon}{2}$ and $\mu_n([b+\varepsilon,1])< \frac{\varepsilon}{2}$. First we bound the expressions $\mu_n([0,b-\varepsilon])$ and $\mu_n([b+\varepsilon,1])$ for even values of $n$.

For $x\in [0,1]$ we denote $F_n(x)=\mu_n([0,x])$. For $n=2k$ we have 
\begin{align}\label{eq:even-iter}
F_{2k+1}(x)=\left( F_{2k}(x)\right)^2
\end{align}
because the maximum of two independent random variables is below $x$ iff both of them are below $x$. Note also that
\begin{align}\label{eq:odd-iter}
F_{2k+2}(x)=2F_{2k+1}(x)-\left(F_{2k+1}(x) \right)^2=2\left( F_{2k}(x)\right)^2 - \left( F_{2k}(x)\right)^4.
\end{align}
The first equality follows from the inclusion-exclusion principle. The second inequality follows from equation \eqref{eq:even-iter}.
Let $\varphi(x)=2x^2-x^4$, and let $\varphi^k$ be $k$ times composition of $\varphi$ on itself, i.e.,
$$\varphi^k(x)=\underbrace{\varphi\circ\cdots\circ \varphi}_\text{$k$ times}(x).$$
It readily follows by induction from equation \eqref{eq:odd-iter} that
\begin{align}\label{eq:odd-iter2}
F_{2k}(x)=\varphi^k(F_0(x))=\varphi^k(x).
\end{align}
 
Note that for $0\leq x \leq 1$, $\varphi(x)$ is a monotonically increasing polynomial with three fixed points: $0,b,1$. In addition, $\varphi(x)<x$ for $0<x<b$, and $\varphi(x)>x$ for $b<x<1$ (see Figure \ref{fig:2x2-x4}).

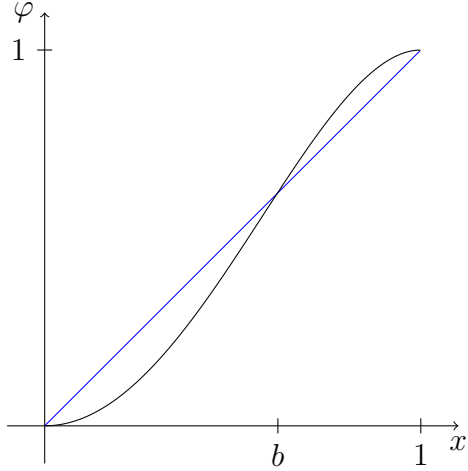
\begin{figure}[h]
\centering
\caption{The function $\varphi(x)=2x^2-x^4$.}\label{fig:2x2-x4}
\begin{tikzpicture}[xscale=5,yscale=5,domain=0:1,samples=100]\label{fig:2x2-x4}
    \draw[->] (-0.1,0) -- (1.1,0) node[below] {$x$};
    \draw[->] (0,-0.1) -- (0,1.1) node[left] {$\varphi$};
    \draw (1,0.02) -- (1,-0.02) node[below] {$1$};
    \draw (0.62,0.02) -- (0.62,-0.02) node[below] {$b$};
    \draw (0.02,1) -- (-0.02,1) node[left] {$1$};
    \draw[blue] plot (\x,{\x});
    \draw[black] plot (\x,{2*\x*\x-\x*\x*\x*\x});
\end{tikzpicture}
\end{figure}

For every $x<b$ we consider the sequence $(F_{2k}(x))_{k=0}^\infty=(\varphi^k(x))_{k=0}^\infty$, which is monotonically decreasing because 
\begin{align*}
\varphi^{k+1}(x)=\varphi^k(\varphi(x))\leq \varphi^k(x).
\end{align*}
Let $\lim_{k\rightarrow \infty} \varphi^k(x)=l(x)$. The sequence is monotonically decreasing; therefore, $l(x)$ is well defined and $l(x)\leq x <b$. Note also that $l(x)$ is a fixed point of $\varphi$ because
\begin{align*}
l(x)=\lim_{k\rightarrow \infty} \varphi^k(x)=\lim_{k\rightarrow \infty} \varphi(\varphi^k(x))=\varphi(l(x)).
\end{align*}
Hence $l(x)=0$.

Using similar arguments we can show that for $x>b$ the sequence $\lim_{k\rightarrow \infty} \varphi^k(x)$ is monotonically increasing and its limit is the fixed point 1. Summarizing, we have
\begin{eqnarray}
\lim_{k\rightarrow\infty} F_{2k}(x)=\lim_{k\rightarrow\infty}\varphi^k(F_0(x))=\begin{cases}
0\ \ \text{ if } 0<x<b \\ 
b\ \ \text{ if } x=b.\\
1\ \ \text{ if } b<x\leq 1.
\end{cases}%  
\end{eqnarray}
This implies that $F_{2k}$ converges to a limit Dirac measure concentrated in $b$. The fact that the sequence $\{F_{2k+1}\}_k$ converges to the same limit readily follows from equation \eqref{eq:even-iter}. 

We have proved that the CDF of $\mu_n$ pointwise converges to the CDF of the Dirac measure, which implies weak convergence.
%the fact that if $F_{2k}$ is close to the cdf of the Dirac measure on $b$ then by equation \eqref{eq:even-iter} $F_{2k+1}$ remains close to the cdf of the Dirac measure. 
%Therefore, the fact that  $\{F_{2k}\}_k$ converges to a Dirac measure implies that the sequence $\{F_{2k+1}\}_k$ converges to the same limit.
\end{proof}

\subsection{Random controlling player game}

The following observation demonstrates that unlike alternating random games the value of a random zero-sum game with random structure does not concentrate.

\begin{proposition}\label{pro:zs}
For every $n$, $\val(R_n)$ is the uniform distribution over $[0,1]$.
\end{proposition}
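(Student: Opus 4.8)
The plan is to prove the stronger \emph{invariance} $\mu_{n+1}=\mu_n$ by induction on $n$, where $\mu_n:=\val(R_n)$; since $\mu_0$ is the uniform distribution, this immediately yields the claim for every $n$. The base case $n=0$ is trivial, as $R_0$ is a single leaf and $\val(R_0)=\mu_0$.

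For the inductive step I would reuse the backward-induction decomposition from the proof of Theorem \ref{th:zs}: the value of $R_{n+1}$ is determined by the two independent subgames of depth $n$ hanging from the root, whose values $U,V$ are two independent draws from $\mu_n$. The crucial difference from the alternating game is that the controller of the root is now random and, by the definition of the random controlling player model, drawn independently of the payoffs (hence of $U,V$). Thus with probability $\tfrac12$ the root belongs to Player 1 and the value equals $\max(U,V)$, while with probability $\tfrac12$ it belongs to Player 2 and the value equals $\min(U,V)$.

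To finish I would simply average the two conditional laws at the level of CDFs. Writing $F_n(x)=\mu_n([0,x])$ and using $\Pr[\max(U,V)\le x]=F_n(x)^2$ together with $\Pr[\min(U,V)\le x]=1-(1-F_n(x))^2=2F_n(x)-F_n(x)^2$, one obtains
\begin{align*}
F_{n+1}(x)=\tfrac12\, F_n(x)^2+\tfrac12\bigl(2F_n(x)-F_n(x)^2\bigr)=F_n(x),
\end{align*}
so $\mu_{n+1}=\mu_n$, closing the induction. Equivalently, and perhaps more transparently, one can observe that after conditioning on the unordered pair $\{U,V\}$, selecting the maximum or the minimum with equal probability is the same experiment as selecting one of the two i.i.d.\ draws uniformly at random; the latter is trivially distributed as a single draw from $\mu_n$, which gives $\mu_{n+1}=\mu_n$ directly.

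I do not expect a genuine obstacle here: the entire content is the algebraic cancellation showing that the average of the max-CDF $F_n^2$ and the min-CDF $2F_n-F_n^2$ returns $F_n$ itself, which makes \emph{every} distribution (in particular the uniform one) a fixed point of the random-controller operation. The only point requiring care is the independence bookkeeping — that the identity of the controlling player at the root is independent of the subgame values, and that the two subtrees are i.i.d.\ copies of $R_n$ — but both are built directly into the definition of the model.
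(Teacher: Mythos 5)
Your proposal is correct and follows essentially the same route as the paper: both derive the recursion $\nu_{n+1}=\tfrac12\max(\nu_n,\nu_n)+\tfrac12\min(\nu_n,\nu_n)$ and verify at the level of CDFs that $\tfrac12 F_n^2+\tfrac12\bigl(2F_n-F_n^2\bigr)=F_n$, so every distribution is a fixed point. Your closing remark that averaging the max and min is the same experiment as picking one of the two i.i.d.\ draws uniformly is a nice conceptual shortcut, but it does not change the substance of the argument.
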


\begin{proof}
Using similar arguments to those in the proof of Theorem \ref{th:zs}, we can write a recursive formula for the sequence of probability measures $(\nu_n)_{n=1}^\infty = (\val(R_n))_{n=1}^\infty$. We argue that 
\begin{align}\label{eq:rec-sym-zs}
\nu_{k+1}=\frac{1}{2}\max(\nu_k,\nu_k)+\frac{1}{2}\min(\nu_k,\nu_k).
\end{align}
This follows from the fact that with probability $\frac{1}{2}$ Player 1 controls a certain vertex at depth $n-k-1$, and then he chooses the maximal payoff. With probability $\frac{1}{2}$ Player 2 controls this vertex, and then he chooses the minimal payoff. From this recursive formula we deduce that
\begin{align*}
\nu_{k+1}([0,x])=\frac{1}{2}(\nu_k([0,x]))^2+\frac{1}{2}[2\nu_k([0,x])-(\nu_k([0,x]))^2]=\nu_k([0,x]).
\end{align*}
Namely, the sequence of measures $(\nu_n)_{n=1}^\infty$ remains constant for every initial probability measure $\nu_0$.
\end{proof}

\section{General Domains}
Now we consider the case where $C\subset \rset^2$ is a general compact convex set with non-empty interior.

\begin{definition}\label{def:c}
For a point $x\in \mathbb{R}^2$ we denote $C_{>>}(x):=\{c\in C:c_1>x_1 \text{ and } c_2>x_2\}$ the first quadrant with the origin at point $x$. Similarly we denote $C_{<>}(x)$, $C_{<<}(x)$, and $C_{><}(x)$ the second, third, and fourth quadrants respectively, see Figure \ref{pic:defC}. More generally we denote $C_{\square \blacksquare}(x):=\{c\in C:c_1 \square x_1 \text{ and } c_2 \blacksquare x_2\}$ where $\square, \blacksquare \in \{\leq, <, \geq, > \}$.
\end{definition}

\begin{figure}[h]
\begin{center}
\includegraphics[scale=1]{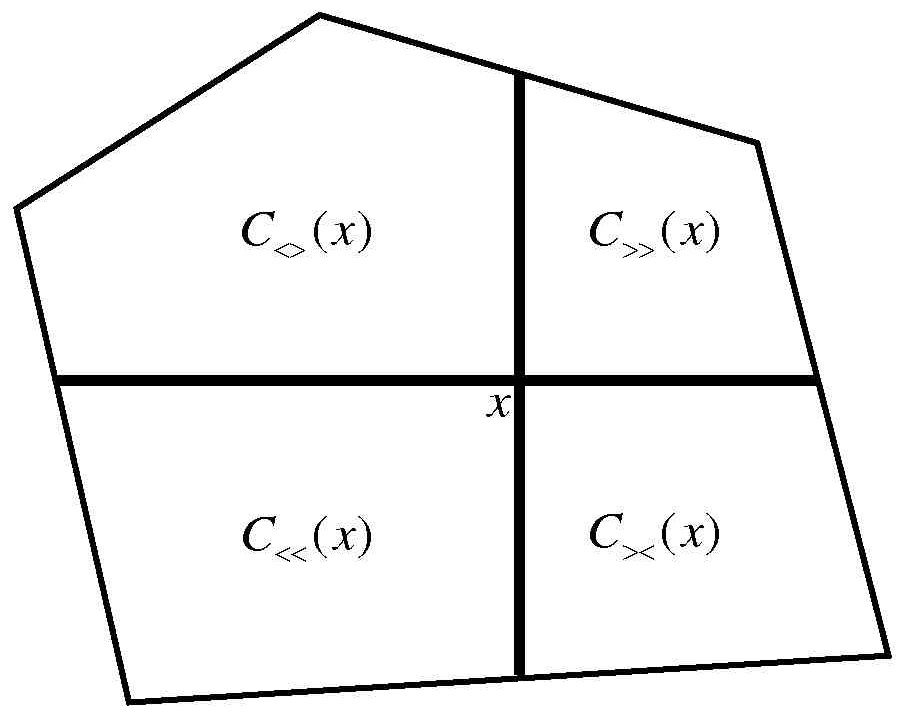}
\end{center}
\caption{The sets $C_{>>}(x)$, $C_{<>}(x)$, $C_{<<}(x)$, and $C_{><}(x)$.}\label{pic:defC}
\end{figure}
 
\subsection{Alternating game}
As in the zero-sum case, the value of the alternating game concentrates around a point. Moreover the concentration point is Pareto efficient.
\begin{theorem}\label{th:gd}
The value of the alternating game $\val(A_n)$ concentrates around a Pareto efficient point.
\end{theorem}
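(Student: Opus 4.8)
===PLAN===

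\textbf{Approach.} The plan is to mimic the zero-sum proof by tracking the value distribution through backward induction, but now in two dimensions. The essential new difficulty is that in the general convex case each player, when controlling a node, does not simply take a max or min of a scalar: Player $1$ picks whichever of the two children's value profiles gives \emph{him} the larger first coordinate, and Player $2$ picks whichever gives \emph{her} the larger second coordinate. So the one-dimensional recursion on CDFs must be replaced by a recursion on measures on $\mathbb{R}^2$, and the selection at each node couples the two coordinates. I would set $\mu_n := \val(A_n)$ and write the two-step recursion where over an even level Player $1$ selects, between two i.i.d.\ draws $X, Y \sim \mu_n$, the one with the larger first coordinate (and passes up the full profile of that draw), and over an odd level Player $2$ selects the one with the larger second coordinate.

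\textbf{Key steps.} First I would describe the limiting object we expect: a Pareto efficient point $c^\ast$ on the upper-right boundary of $C$. Intuitively, repeated ``take the better-for-me of two samples'' operations push the selected profile monotonically toward the efficient frontier, analogously to how $\varphi^k$ drove the scalar value to $0$ or $1$. Second, the crucial monotonicity observation: when Player $1$ picks the larger first coordinate, the first coordinate of $\mu_{n+1}$ stochastically dominates that of $\mu_n$ (the max of two i.i.d.\ draws), and symmetrically Player $2$'s step stochastically increases the second coordinate. So along the alternating sequence, the first-coordinate marginal is stochastically nondecreasing on even steps and the second-coordinate marginal is stochastically nondecreasing on odd steps; both marginals are bounded because $C$ is compact. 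Third, I would argue that these monotone bounded sequences of marginals force the mass to accumulate on the efficient boundary, because any mass in the interior of $C$ gets improved in both coordinates with positive probability at each pair of steps, so it cannot persist in the limit. Finally, to upgrade ``concentration on the frontier'' to ``concentration at a single point,'' I would set up a self-consistency/fixed-point argument on the frontier analogous to the fixed-point analysis of $\varphi$: parametrize the Pareto frontier by a single scalar coordinate (using that it is the graph of a strictly decreasing function by strict convexity or by the compact convex structure), reduce the selection dynamics to a one-dimensional map on this parameter, and show that iterating it converges to a unique fixed point, which is then the concentration point $c^\ast$.

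\textbf{Main obstacle.} The hard part will be the last step: reducing the genuinely two-dimensional selection dynamics to a tractable one-dimensional recursion on the frontier. In the zero-sum case the max/min structure gave the clean polynomial $\varphi(x)=2x^2-x^4$ with three explicit fixed points; here the two players optimize \emph{different} coordinates, so even after we know the limit lives on the frontier, the two-step map describing how the along-frontier distribution evolves is not a simple composition of one-dimensional max/min operators. I expect to need a careful argument that, once mass is concentrated near the frontier, a Player-$1$ step followed by a Player-$2$ step acts on the frontier parameter as a contraction with a unique attracting fixed point, and that the off-frontier component stays negligible throughout—controlling the interaction between the ``pull to the frontier'' and the ``slide along the frontier'' is where the real work lies, and it is also where the asymmetry (Player $1$ acting last) will determine \emph{which} Pareto point is selected.
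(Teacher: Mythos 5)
Your plan correctly sets up the two-dimensional recursion and correctly identifies where the difficulty lies, but two of its load-bearing steps do not hold up, and the paper's proof avoids both. First, the monotonicity you invoke is too strong. It is true that Player 1's step stochastically increases the first-coordinate marginal (it becomes the max of two i.i.d.\ copies), but Player 2's intervening step selects by the \emph{second} coordinate and, because of the Pareto trade-off in $C$, this tends to push the first coordinate \emph{down}. So the first-coordinate marginal of $\mu_{2n+2}$ need not stochastically dominate that of $\mu_{2n}$, and the ``monotone bounded sequences of marginals'' you want to pass to the limit do not exist as stated. What \emph{is} monotone --- and this is the key idea your plan is missing --- is the quantile at level $b=\frac{\sqrt{5}-1}{2}$, the interior fixed point of $\varphi(t)=2t^2-t^4$ from the zero-sum case. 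Writing $F^1_n$ for the first-coordinate CDF and $x_n$ for the point with $F^1_n(x_n)=b$, Player 1's step gives $F^1_{2n+1}(x_{2n})=b^2$ exactly, and Player 2's step, being \emph{some} selection rule between two i.i.d.\ draws, can raise this value to at most $1-(1-b^2)^2=\varphi(b)=b$; hence $x_{2n+2}\geq x_{2n}$. The fixed-point property of $b$ is precisely what makes this one quantile (and no other) survive the worst case, and it is why the golden-ratio constant from the zero-sum analysis reappears in the general domain.

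Second, your final step --- parametrizing the frontier and proving that the two-step dynamics act as a contraction along it --- is the part you yourself flag as the real obstacle, and no workable argument is offered; the paper never performs such a reduction. Instead, with $x^*=\lim x_{2n}$ and $y^*=\lim y_{2n+1}$ in hand, it shows (i) $\mu_n(\{c\in C: c_1\leq x\})\to 0$ for every $x<x^*$, because once the CDF value at $x$ drops below $b$ it is driven to $0$ by iterating $\varphi$, exactly as in the zero-sum proof (and symmetrically for $y<y^*$); and (ii) $(x^*,y^*)$ is Pareto efficient, by contradiction: if some $(x',y')$ with $x'>x^*$, $y'>y^*$ had $\mu_0(C_{\geq\geq}(x',y'))>0$, then at each step the mass of $C_{\geq\geq}(x',y')$ multiplies by at least $2b>1$ (with probability at least $b$ the sibling subtree's value lies in $\{c_1\leq x^*\}$ and is then rejected in favor of the dominating outcome, plus the flipped disjoint event), which would eventually force that mass above $1$. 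Concentration at $(x^*,y^*)$ then follows since every neighborhood of it contains some $C_{\geq\geq}(x',y')$ with $x'<x^*$, $y'<y^*$, whose mass tends to $1$ by (i). No contraction and no control of the ``slide along the frontier'' are needed. As written, your proposal has a genuine gap both at the monotonicity step and at the identification of the concentration point.
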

\begin{proof}
For $x\in \mathbb{R}$ we denote $C_1(x)=\{(x',y')\in C:x'\leq x\}$, similarly, for $y\in \mathbb{R}$ we denote $C_2(y)=\{(x',y')\in C:y'\leq y\}$. We let $F^1_n(x):=\nu_n(C_1(x))$ be the CDF of the marginal of $\nu_n$ to Player 1's payoffs and $F^2_n(y)=\nu_n(C_2(y))$ to Player 2's payoffs. Let $x_n$ be the number $x$ for which $F^1_n(x)=b$ and similarly $y_n$ is defined by the number $y$ for which $F^2_n(y)=b$. 
\begin{lemma}\label{mon:lemma}
Both $(x_{2n})_{n=1}^\infty$ and $(y_{2n+1})_{n=1}^\infty$ are increasing sequences. 
\end{lemma}
\begin{proof}
%We let $C^{x}\subset C$ be the subset of all points $(z,w)\in C$ such that $z\leq x$. 
In order to show that $x_{2n}$ is increasing with $n$ it is sufficient to show that 
\begin{equation}\label{eq:smb}
\mu_{2n+2}(C_1(x_{2n}))\leq b.
\end{equation}
Since if \eqref{eq:smb} holds, then $F_{2n+2}(x_{2n})\leq b$, and by the monotonicity of the CDF we have that $x_{2n+2}\geq x_{2n}$. 

First, note that $\mu_{2n}(C_1(x_{2n}))=b$ 
and hence $\mu_{2n+1}(C_1(x_{2n}))=b^2$; this follows directly by a similar consideration as in equation \eqref{eq:even-iter}.

Now we consider step $2n+2$. Clearly, $\mu_{2n+2}(C_1(x_{2n}))$ is bounded from above by the probability that two i.i.d.\ draws that are distributed according to $\mu_{2n+1}$ lies in $C_1(x_{2n})$ (because $\mu_{2n+2}$ is essentially obtained by a selection rule among i.i.d.\ outcomes that are distributed according to $\mu_{2n+1}$). Therefore we have 
\begin{equation}\label{eq:smb2}
\mu_{2n+2}(C_1(x_{2n}))\leq 1-(1-b^2)^2=2b^2-b^4=\varphi(b)=b.   
\end{equation}
%where $1-(1-b^2)^2$ is exactly the probability of appearance of an outcome in $C(x_{2n})$ among two outcomes that are distributed according to $\mu_{2n+1}$.

A symmetric consideration applied to Player $2$ shows that $y_{2n+1}$ is increasing.
\end{proof}    
%In order to prove that $\val(T_n)$ converges to a point mass concentrated on the pareto efficient frontier  we will show first that $(x_{2n},y_{2n+1})$ converges  to a  pareto efficient point in $C$.
%\begin{lemma}
%The limit, $\lim_{n\rightarrow\infty}(x_{2n},y_{2n+1})=(x^*,y^*)$ lies in the Pareto efficient boundary of $C$.
%\end{lemma} 
%\begin{proof}

Let $x^*=\lim_{n\rightarrow \infty} x_{2n}$, and $y^*=\lim_{n\rightarrow \infty} y_{2n+1}$.
We first contend that if $x<x^*$ then $\lim_{n}\mu_{2n}(C_1(x))=0$.
To see this note that, by definition $\mu_{2k}(C_1(x))<b$ for some $k$. A similar derivation to those applied in the proof of Theorem \ref{th:zs} yields that 
\begin{align}\label{eq:av1}
\lim_{n}\mu_{2n}(C_1(x))\leq \lim_{n}\varphi^{n}(\mu_{2n}(C_1(x)))=0,
\end{align}
where the last equality follows from the fact that $\mu_{2k}(C^x)<b$. From this we can deduce that $\lim_{n}\mu_{n}(C_1(x))=0$, because for odd steps we can bound $\mu_{2n+1}(C_1(x))\leq 2\mu_{2n}(C_1(x))$.

Similar arguments show that for every $y<y^*$,
\begin{align}\label{eq:av2}
\lim_{n}\mu_{n}(C_2(y))=0.
\end{align} 

%This in particular implies that for every $x<x^*$ and every $y<y^*$,
%\begin{equation}\label{eq:av}
%\lim_{n}\mu_{n}(C_y)=\mu_{n}(C^x)=0.
%\end{equation} 

We next show that $(x^*,y^*)$ lies on the Pareto efficient boundary.   
%For every $(x,y)$ let 
%\begin{align*}
%C_{\geq \geq}(x,y):=\{(x',y')\in C : x'\geq x \text{ and } y'\geq y \}.
%\end{align*}
Equations \eqref{eq:av1} and \eqref{eq:av2} imply that $C_{\geq \geq}(x,y)$ is nonempty for every $x<x^*,$ and $y<y^*$ (see Definition \ref{def:c}), and therefore $C_{\geq \geq }(x^*,y^*)$ is nonempty.

Assume by way of contradiction that $(x^*,y^*)$ is Pareto dominated. Let $(x',y')$ be any point such that $x'>x^*,$ $y'>y^*,$ and  $C_{\geq \geq}(x',y')$ has a nonempty interior. In particular $\mu_0(C_{\geq \geq}(x',y'))>0$.
We claim first that for every $n$,
\begin{equation}\label{eq:rec2}
\mu_{2n+1}(C_{\geq \geq}(x',y'))\geq 2b\mu_{2n}(C_{\geq \geq}(x',y')).
\end{equation}
Because with probability at least $b\mu_{2n}(C_{\geq \geq}(x',y'))$ the value of the left subtree is in $C_1(x*)$ and the value of the right subtree is in $C_{\geq \geq}(x',y')$; in such a case Player 1 will choose the outcome from $C_{\geq \geq}(x',y')$. However, we also have the disjoint event where the left and right realizations are flipped (this event is disjoint because $C_1(x*)\cap C_{\geq \geq}(x',y')=\emptyset$) where again the chosen outcome is in $C_{\geq \geq}(x',y')$. Therefore $2b\mu_{2n}(C_{\geq \geq}(x',y'))$ is a lower bound on $\mu_{2n+1}(C_{\geq \geq}(x',y'))$.
%To see this note that $\mu_{2n+1}(C^{(x',y')})$ represents the probability that an outcome from $C^{(x',y')}$ will be the realised value of the random game $T_{2n+1}$. By backward induction, since player $1$ controls the root of $T_{2n+1}$ this probability equals the probability that either the two outcomes of the two sub-trees of the root are in $(C^{(x',y')})$ which is exactly $\mu_{2n}(C^{(x',y')})^2,$ or, only one of the backward induction outcomes lies in $C^{(x',y')}$ and the other outcome is less than $x'$, which has a probability of $2\mu_{2n}(C^{(x',y')})F^1_{2n}(x').$      
%Adding these two expressions yields equation \eqref{eq:rec}. Since, $x'>x_{2n}$ we have by definition that $F^1_{2n}(x')>b$. Hence equation \eqref{eq:rec} implies that for every $n$,
%\begin{equation}\label{eq:rec2}
%\mu_{2n+1}(C^{(x',y')})\geq \mu_{2n}(C_{(x',y')})(2b).
%\end{equation}

A similar consideration applied to Player $2$ shows that for every $n$,
\begin{equation}\label{eq:rec3}
\mu_{2n+2}(C_{\geq \geq}(x',y'))\geq 2b \mu_{2n+1}(C_{\geq \geq}(x',y')).
\end{equation}
Together Equations \eqref{eq:rec2} and \eqref{eq:rec3} imply that for every $n$,
$$\mu_{n}(C_{\geq \geq}(x',y'))\geq (2b)^n \mu(C_{\geq \geq}(x',y')).$$
Since $\mu_0(C_{\geq \geq}(x',y'))>0$ this stands in contradiction to the fact that $\mu_{n}(C_{\geq \geq }(x',y'))\leq 1$ for every $n$. We must therefore have that $(x^*,y^*)$ lies in the Pareto efficient boundary.

By Equations \eqref{eq:av1} and \eqref{eq:av2} for every $x'<x^*,$ and $y'<y^*$ we have that, $\lim_n \mu_{n}(C_{\geq \geq}(x',y'))=1$. Since every open set $U$ in $C$ that contains $(x^*,y^*)$ must also contain $C^{(x',y')}$ for some $x'<x^*$ and $y'<y^*$, we must have that $\lim_n \mu_{n}(U)=1$. This shows that $\mu_n$ weakly converges to the Dirac measure $\delta_{(x^*,y^*)},$ and establishes the proof of the theorem.   
\end{proof}

\subsection{Random controlling player game}\label{sec:sym}
%We will now consider games with random structure where the outcomes of a binary tree $R_n$ of depth $n$ are drawn uniformly at random from a convex domain $C$. In addition, each of the two players is assigned randomly to control every given node with probability half. We let $\val(R_n)$ be the random variable that represents the value of this games and $\nu_n$ be the law of $\val(R_n)$.

We denote by $P_\varepsilon(C)$ the set of points within a distance at most $\varepsilon$ from the Pareto efficient boundary. We say that a sequence of measures $\nu_n$ \emph{concentrates on the Pareto efficient boundary} if $\lim_{n}\nu_n(P_\varepsilon(C))=1$ for every $\varepsilon>0$.  
%We first show that $\val(R_n)$ satisfies efficiency.

The following proposition shows that although $\val(R_n)$ does not concentrate around a point, the value is (approximately) Pareto efficient with probability close to 1.
\begin{proposition}\label{pro:gen}
The value of the random controlling player game concentrates on the Pareto efficient boundary.
\end{proposition}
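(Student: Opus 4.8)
The plan is to show that for every fixed $\varepsilon > 0$, the probability mass that $\val(R_n)$ assigns to the ``interior'' region $C \setminus P_\varepsilon(C)$ vanishes as $n \to \infty$. The key structural observation is that the set of points that are $\varepsilon$-far from the Pareto boundary is precisely the set of points $x \in C$ that are Pareto dominated by some point at distance bounded below; more usefully, for such an interior point $x$ the quadrant $C_{\geq\geq}(x)$ has strictly positive $\mu_0$-measure, and in fact one can cover $C \setminus P_\varepsilon(C)$ by finitely many ``cells'' on each of which there is a point strictly dominating the whole cell. So first I would reduce the claim to showing that for any fixed point $x$ in the interior of $C$ (equivalently, with $\mu_0(C_{>>}(x)) > 0$), the probability $\nu_n(C_{<<}(x)) \to 0$, where $C_{<<}(x)$ is the third quadrant below $x$. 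This reduction uses compactness: cover the closed set $C \setminus P_\varepsilon(C)$ by finitely many open quadrants $C_{<<}(x^{(j)})$ with each $x^{(j)}$ still interior, then bound $\nu_n(C \setminus P_\varepsilon(C))$ by the finite sum $\sum_j \nu_n(C_{<<}(x^{(j)}))$.

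Next I would set up the recursion for $q_n := \nu_n(C_{<<}(x))$ analogous to Proposition \ref{pro:zs}. The event that the value of an $R_n$ game lands in $C_{<<}(x)$ requires, at the root, that the controlling player selects a child whose value is in $C_{<<}(x)$. The crucial asymmetry is that being in $C_{<<}(x)$ means being dominated in \emph{both} coordinates, so whichever of the two players controls the root, that player would reject a child-value in $C_{<<}(x)$ whenever the other child-value lies in the opposing ``good'' quadrant $C_{>>}(x)$. Writing $p := \mu_0(C_{>>}(x)) > 0$ and noting that the two subgame values are i.i.d.\ draws from $\nu_n$, the mass $q_{n+1}$ is bounded by the probability that the selected child is in $C_{<<}(x)$, which in turn is at most the probability that at least one child is in $C_{<<}(x)$ minus a positive correction coming from the forced rejection. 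The clean bound I expect is of the shape
\begin{align}\label{eq:plan-rec}
q_{n+1} \leq q_n^2 + q_n(1 - q_n) - (\text{positive term from rejection}),
\end{align}
and I would argue that this forces $q_n$ to converge to a fixed point of the induced map that is strictly below the starting value whenever $p > 0$, driving $q_n \to 0$.

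The main obstacle — and the place I would spend the most care — is verifying that the recursion map for $q_n$ has $0$ as its only relevant attracting fixed point, and that the symmetric (random controlling player) structure does not create a nonzero fixed point the way it did in the zero-sum case of Proposition \ref{pro:zs}. The essential difference from the zero-sum setting is geometric: in Proposition \ref{pro:zs} the ``min'' and ``max'' selections were with respect to a single coordinate and cancelled exactly, but here $C_{<<}(x)$ is simultaneously bad for \emph{both} players, so \emph{both} the max-selecting and the min-selecting player actively push mass out of $C_{<<}(x)$ toward $C_{>>}(x)$ whenever a dominating alternative is available. I would make this precise by showing that a value in $C_{<<}(x)$ survives to step $n+1$ only if \emph{neither} child realized in $C_{>>}(x)$, giving a bound like $q_{n+1} \leq q_n \cdot (1 - \mu_n(C_{>>}(x)))$ plus lower-order terms, and then lower-bounding $\mu_n(C_{>>}(x))$ away from $0$ using that $\mu_0(C_{>>}(x)) = p > 0$ and that this good-quadrant mass cannot decay (by a monotonicity argument symmetric to Lemma \ref{mon:lemma}). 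Establishing that uniform lower bound on the good-quadrant mass is the crux; once it is in hand, $q_n$ decays geometrically and the finite-cover reduction finishes the proof.
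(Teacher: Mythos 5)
Your overall architecture (reduce to showing $\nu_n(C_{\leq\leq}(x))\to 0$ for each point $x$ strictly below the Pareto boundary, then finish with a finite cover of $C\setminus P_\varepsilon(C)$ by such quadrants and a union bound) is exactly the paper's. The gap is in the core analytic step. Write $q_n=\nu_n(C_{\leq\leq}(x))$, $p_n=\nu_n(C_{>>}(x))$ and $r_n=1-q_n-p_n$ for the mass of the two off-diagonal quadrants. Your correct observation is that the value at step $n+1$ lies in $C_{\leq\leq}(x)$ only if at least one child value is in $C_{\leq\leq}(x)$ and no child value is in $C_{>>}(x)$ (a dominating sibling is always selected, whichever player controls the node). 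But the probability of that event is
\begin{align*}
q_n^2+2q_nr_n=q_n\bigl(2-q_n-2p_n\bigr),
\end{align*}
not $q_n(1-p_n)$; the discrepancy $q_nr_n$ is first order in $q_n$, not a lower-order correction. When a sibling lies in an off-diagonal quadrant the controlling player may well prefer the $C_{\leq\leq}(x)$ child (e.g.\ Player 1 prefers it whenever its first coordinate is larger), so mass in $C_{\leq\leq}(x)$ can genuinely grow, by a factor approaching $2$ when $q_n$ and $p_n$ are both small. Consequently your recursion contracts only if $q_n+2p_n\geq 1$, and the uniform lower bound $p_n\geq p_0>0$ that you propose to establish does not rescue it: for $x$ close to the Pareto boundary $p_0$ is small and the map sends small $q_n$ upward. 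So the "crux" as you describe it (a geometric decay of $q_n$ driven by a lower bound on $p_n$) does not go through.

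The paper's proof sidesteps this by running the growth argument on the \emph{dominating} quadrant rather than a decay argument on the dominated one. Decomposing by who controls the root and which child dominates in the relevant coordinate, one gets
\begin{align*}
\nu_{n+1}(C_{>>}(x))\;\geq\;\nu_n(C_{>>}(x))\bigl(\nu_n(C_{>>}(x))+F^1_n(x_1)+F^2_n(x_2)\bigr)\;=\;p_n\,(1+q_n),
\end{align*}
since the three quadrant masses other than $p_n$ sum to $1-p_n$ and $q_n$ is counted twice in $F^1_n+F^2_n$. Hence $p_n$ is nondecreasing, and if $q_n\geq\delta$ for infinitely many $n$ then $p_n\geq(1+\delta)^k p_0$ for every $k$, contradicting $p_n\leq 1$. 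This yields $q_n\to 0$ without ever needing the (false) contraction for $q_n$. If you replace your recursion for $q_n$ with this inequality for $p_n$, the rest of your plan (monotonicity of the good-quadrant mass, positivity of $\mu_0(C_{>>}(x))$, and the finite-cover reduction) assembles into a complete proof.
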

\begin{proof}
Similar to the zero-sum case (equation \ref{eq:rec-sym-zs}), we can deduce the following recursive formula on $(\nu_n)_n$.
\begin{align}\label{eq:rec-sym}
\nu_{n+1}=\frac{1}{2}\max_1(\nu_n,\nu_n) + \frac{1}{2}\max_2(\nu_n,\nu_n)
\end{align}
where $\max_i(\nu_n,\nu_n)$ denotes the maximum over the $i$th coordinate of two independent realizations of $\nu_n$.

We let $O_{\frac{\varepsilon}{2}}(C)$ be the set of points with a distance of precisely $\frac{\varepsilon}{2}$ from the efficient boundary. We will show first that for every $(x,y)\in O_{\frac{\varepsilon}{2}}(C)$ it holds that $\lim_{n}\nu_n(C_{\leq \leq}(x,y))=0$ (see Definition \ref{def:c}).

If not, there exists some $(x',y')\in O_{\frac{\varepsilon}{2}}(C)$ and $\delta>0$ for which $\nu_n(C_{\leq \leq}(x',y'))\geq\delta$ for infinitely many $n$'s.
As before $F^1_n(x)$ represents the probability under $\nu_n$ that the realised outcome for Player $1$ would be less than $x$ and similarly $F^2_n(y)$ for Player $2$.

We bound from below the probability $\nu_{n+1}(C_{>>}(x',y'))$ using the recursive formula \eqref{eq:rec-sym}. There are three events under which the realization of $\nu_{n+1}$ lies in $C_{>>}(x',y')$.
\begin{enumerate}
\item Both realizations of $\nu_{n}$ lie in $C_{>>}(x',y')$. The probability of this event is $(\nu_{n}(C_{>>}(x',y')))^2$.
\item Player 1 controls the root---one realization of $\nu_n$ is in $C_{>>}(x',y')$, and the other realization is in $F^1_{n}(x')$. The probability of the first of the two is clearly $\frac{1}{2}.$ The probability of the second is $2\nu_{n}(C_{>>}(x',y'))F^1_{n}(x')$ since the realizations can be flipped (each can appear in the left subgame or the right subgame). Overall this event holds with probability $\nu_{n}(C_{>>}(x',y'))F^1_{n}(x')$.
\item Player 2 controls the root---one realization of $\nu_n$ lies in $C_{>>}(x',y')$, and the other realization lies in $F^2_{n}(x')$. The probability of this event is $\nu_{n}(C_{>>}(x',y'))F^2_{n}(x')$.
\end{enumerate}
These three events are disjoint (note that (2) and (3) are disjoint because the controlling player defers). Therefore, we deduce that for every $n$
\begin{eqnarray}\label{eq:im1}
 & &\nu_{n+1}(C_{>>}(x',y'))\geq\\
\notag& &\nu_{n}(C_{>>}(x',y'))(\nu_{n}(C_{>>}(x',y'))+F^1_{n}(x')+F^2_n(y')).
\end{eqnarray} 
We can rewrite Equation \eqref{eq:im1} as follows
\begin{eqnarray}\label{eq:im}
 & &\nu_{n+1}(C_{>>}(x',y'))\geq\\
\notag& &\nu_{n}(C_{>>}(x',y'))(\nu_{n}(C_{>>}(x',y'))+\nu_{n}(C_{\leq >}(x',y'))+\nu_{n}(C_{>\leq}(x',y'))+2\nu_n(C_{\leq \leq}(x',y'))).
\end{eqnarray} 
Since $\nu_{n}(C_{>>}(x',y'))+\nu_{n}(C_{<>}(x',y'))+\nu_{n}(C_{><}(x',y'))+\nu_n(C_{\leq \leq}(x',y'))=1$ we conclude that $\nu_{n}(C_{>>}(x',y'))$ increases with $n$. Also if $\nu_n(C_{\leq \leq}(x',y'))\geq\delta$ then 
$$\nu_{n+1}(C_{>>}(x',y'))\geq(1+\delta)\nu_{n}(C_{>>}(x',y')).$$
Hence since  $\nu_n(C_{\leq \leq}(x',y'))\geq\delta$ for infinitely many $n$ we have that for every $k$ there exists $n_0$ such that for every $n>n_0$,
$$\nu_n(C_{>>}(x',y'))\geq(1+\delta)^k\nu_0(C_{> >}(x',y')).$$
This stands in contradiction to the fact that $\nu_0((C_{>>}(x',y')))>0$. 

Fix $\eta>0$. We show that there exists $n_0$ such that for every $n>n_0$,
$$\nu_n(P_{\varepsilon}(C))\geq 1-\eta.$$
Choose $\{(x_i,y_i)\}_{i=1}^m\subset O_{\frac{\varepsilon}{2}}(C)$ to be a finite sequence of points, dense enough such that if $(x',y')\in C$ satisfies 
$$(x',y')\not\in \bigcup_{i=1}^m C_{\leq \leq}(x_i,y_i)$$ 
then $(x',y')\in P_\varepsilon(C)$, i.e., $C\setminus \big(\bigcup_{i=1}^m C_{\leq \leq}(x_i,y_i)\big)\subseteq P_{\varepsilon}(C)$.
By the above we can find large enough $n_0$ such that for every $n>n_0$, and every $1\leq i\leq m$,
$$\nu_n(C_{\leq \leq}(x_i,y_i))\leq\frac{\eta}{m}.$$

We therefore conclude that for every $n>n_0$,
$$\nu_n(\bigcup_{i=1}^m C_{\leq \leq}(x_i,y_i))\leq \eta.$$
Since $C\setminus \big(\bigcup_{i=1}^m C_{\leq \leq}(x_i,y_i)\big)\subseteq P_{\varepsilon}(C)$ we conclude that
$$\nu_n(P_{\varepsilon}(C))\geq 1-\eta.$$
This completes the proof of the proposition.
\end{proof}  

\section{The Hybrid Model}\label{sec:bar}

%We will now consider a random game that combine the sequential model and the simultaneous symmetric model.
%For every $0<\varepsilon<\frac{1}{2}$ let $R^\varepsilon_n$ be the random game that is obtained for a binary tree of depth $n$ as follows: the payoffs at the leafs of the tree are drawn uniformly at random from the convex set $C$. Players at the nodes are drawn at random such that a node at depth $n-k$ is controlled by player $1$ with probability $\frac{1}{2}+\varepsilon$ and by player $2$ with probability $\frac{1}{2}-\varepsilon$, if $k$ is even, and vice versa for odd $k$.

As we did previously, we consider as a warm-up the zero-sum case where $C=[0,1]$. We establish a generalization of Theorem \ref{th:zs} which states that the value of the random hybrid game $H^\varepsilon_n$ converges to $b^\varepsilon=\frac{\sqrt{1+\varepsilon^2}-1+\varepsilon}{2\varepsilon}$. This is indeed a generalization of Theorem $1$ because if we set $\varepsilon=\frac{1}{2}$ then the random game $H^{1/2}_n=A_n$ and $b^\varepsilon=b=\frac{\sqrt{5}-1}{2}$.

\begin{proposition}
$\val(H^\varepsilon_n)$ concentrates around the point $b^\varepsilon=\frac{\sqrt{1+\varepsilon^2}-1+\varepsilon}{2\varepsilon}$.
\end{proposition}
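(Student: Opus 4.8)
The plan is to mimic the proof of Theorem~\ref{th:zs}, replacing the deterministic $\max$/$\min$ at each level by the appropriate mixture dictated by the hybrid probabilities $\tfrac12\pm\varepsilon$. First I would set up the recursion on the CDFs. Let $\mu_n:=\val(H^\varepsilon_n)$ and $F_n(x)=\mu_n([0,x])$. At a node of depth $n-k$ with $k$ odd, Player~1 controls with probability $\tfrac12+\varepsilon$ (choosing the max) and Player~2 with probability $\tfrac12-\varepsilon$ (choosing the min); for $k$ even the roles are swapped. Recalling that for the max the CDF transforms as $x\mapsto x^2$ and for the min as $x\mapsto 2x-x^2$, I would obtain the two one-step maps
\begin{align*}
G_{\mathrm{odd}}(t) &= \Bigl(\tfrac12+\varepsilon\Bigr)t^2 + \Bigl(\tfrac12-\varepsilon\Bigr)(2t-t^2),\\
G_{\mathrm{even}}(t) &= \Bigl(\tfrac12-\varepsilon\Bigr)t^2 + \Bigl(\tfrac12+\varepsilon\Bigr)(2t-t^2),
\end{align*}
acting on $t=F_n(x)\in[0,1]$, where $G_{\mathrm{odd}}$ is applied when passing from an even to an odd level and $G_{\mathrm{even}}$ when passing from odd to even.

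Next I would compose over two consecutive levels to get a single map $\psi_\varepsilon := G_{\mathrm{even}}\circ G_{\mathrm{odd}}$ governing the even-depth CDFs, so that $F_{2k}(x)=\psi_\varepsilon^k(x)$ exactly as in \eqref{eq:odd-iter2}. The goal is to show $\psi_\varepsilon$ is a monotonically increasing map of $[0,1]$ into itself whose only interior fixed point is at the value $t^*$ corresponding to $x=b^\varepsilon$, with $\psi_\varepsilon(t)<t$ below it and $\psi_\varepsilon(t)>t$ above it, together with the boundary fixed points $0$ and $1$. Once this ``three fixed points'' picture is in place, the monotone-convergence argument of Theorem~\ref{th:zs} applies verbatim: for $x<b^\varepsilon$ the orbit $\psi_\varepsilon^k(x)\downarrow 0$ and for $x>b^\varepsilon$ it $\uparrow 1$, so $F_{2k}$ converges pointwise to the Dirac CDF at $b^\varepsilon$, and the odd subsequence inherits the same limit through $G_{\mathrm{odd}}$, giving weak convergence $\mu_n\Rightarrow\delta_{b^\varepsilon}$.

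The main obstacle, and the only genuinely new computation, is the fixed-point analysis of $\psi_\varepsilon$. The individual maps $G_{\mathrm{odd}},G_{\mathrm{even}}$ have $0$ and $1$ as fixed points, but their nontrivial interior fixed points need not coincide, so the relevant concentration value $b^\varepsilon$ is the fixed point of the \emph{composition}, not of either factor. I would locate it by observing that a value $t$ fixed by $G_{\mathrm{odd}}$ forces, after simplification, a quadratic whose positive root is exactly the claimed $b^\varepsilon=\frac{\sqrt{1+\varepsilon^2}-1+\varepsilon}{2\varepsilon}$; I would then check this same $t$ is a fixed point of $G_{\mathrm{even}}$ as well (by the symmetry $t\leftrightarrow 1-t$ relating $G_{\mathrm{odd}}$ and $G_{\mathrm{even}}$ together with the self-consistency of the alternation), hence of $\psi_\varepsilon$, and verify it is the unique interior fixed point. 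The remaining sanity checks are routine: confirming the sign of $\psi_\varepsilon(t)-t$ on either side of $t^*$ (equivalently, that $\psi_\varepsilon'(t^*)<1$ so the interior fixed point is repelling for the forward orbit from the appropriate side) and confirming the specialization $\varepsilon=\tfrac12$ recovers $\varphi$ and $b$. These ingredients reduce the proposition to exactly the structure already exploited in Theorem~\ref{th:zs}.
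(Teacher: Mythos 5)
Your overall architecture is exactly the paper's: write down the two one-step CDF maps, compose them into a single two-level map, locate its fixed points, and rerun the monotone-orbit argument of Theorem~\ref{th:zs}. The recursions you state for $G_{\mathrm{odd}}$ and $G_{\mathrm{even}}$ agree with the paper's equations \eqref{zs:ev}--\eqref{zs:od}, and the convergence argument at the end is fine. The problem is in the step you yourself single out as ``the main obstacle, and the only genuinely new computation.'' You propose to find $b^\varepsilon$ by solving $G_{\mathrm{odd}}(t)=t$ and then checking that the same $t$ is also fixed by $G_{\mathrm{even}}$. That route fails: simplifying, $G_{\mathrm{odd}}(t)=2\varepsilon t^2+(1-2\varepsilon)t$ and $G_{\mathrm{even}}(t)=-2\varepsilon t^2+(1+2\varepsilon)t$, so $G_{\mathrm{odd}}(t)-t=2\varepsilon t(t-1)$ and $G_{\mathrm{even}}(t)-t=-2\varepsilon t(t-1)$. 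Neither factor has any interior fixed point; the quadratic you would obtain has roots $0$ and $1$ only. The interior fixed point of $\psi_\varepsilon=G_{\mathrm{even}}\circ G_{\mathrm{odd}}$ exists precisely because it is \emph{not} fixed by either factor.

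The correct computation (which is what the paper does, in expanded form, via the quartic $\varphi^\varepsilon$ and its four roots, one of which lies outside $[0,1]$) is to factor the composition. Writing $u=G_{\mathrm{odd}}(t)$ one gets
\[
\psi_\varepsilon(t)-t=2\varepsilon\bigl(t(t-1)-u(u-1)\bigr)=-4\varepsilon^2\,t(t-1)\bigl(t+G_{\mathrm{odd}}(t)-1\bigr),
\]
so the interior fixed point solves $G_{\mathrm{odd}}(t)=1-t$, i.e., $2\varepsilon t^2+(2-2\varepsilon)t-1=0$, whose root in $(0,1)$ is $\frac{\sqrt{1+\varepsilon^2}-1+\varepsilon}{2\varepsilon}=b^\varepsilon$. (Equivalently: the symmetry you mention is $G_{\mathrm{even}}=\iota\circ G_{\mathrm{odd}}\circ\iota$ with $\iota(t)=1-t$, hence $\psi_\varepsilon=(\iota\circ G_{\mathrm{odd}})^2$, and $b^\varepsilon$ is the fixed point of $\iota\circ G_{\mathrm{odd}}$ --- a $2$-cycle phenomenon, not a common fixed point.) Once the fixed point is located this way, the factorization also hands you the sign analysis for free: $-4\varepsilon^2 t(t-1)>0$ on $(0,1)$ and $t+G_{\mathrm{odd}}(t)-1$ is strictly increasing, so $\psi_\varepsilon(t)<t$ on $(0,b^\varepsilon)$ and $\psi_\varepsilon(t)>t$ on $(b^\varepsilon,1)$. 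Note also that for concentration you need the interior fixed point to be \emph{repelling}, which corresponds to $\psi_\varepsilon'(t^*)>1$, not $<1$ as written in your parenthetical; with these two corrections the rest of your plan goes through and coincides with the paper's proof.
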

The proof uses similar ideas to the proof of Theorem \ref{th:zs}.
\begin{proof}
For every $n$, let $F_n(x)$ be the CDF of $\val(H^\varepsilon_n)$. For every $n$ we have,
\begin{equation}\label{zs:ev}
F_{2n+1}(x)=(\frac{1}{2}+\varepsilon)(F_{2n}(x))^2+(\frac{1}{2}-\varepsilon)[1-(1-F_{2n}(x))^2],
\end{equation}
and
\begin{equation}\label{zs:od}
F_{2n}(x)=(\frac{1}{2}-\varepsilon)(F_{2n}(x))^2+(\frac{1}{2}+\varepsilon)[1-(1-F_{2n}(x))^2].
\end{equation}
Using equations \eqref{zs:ev} and \eqref{zs:od} we get,
\begin{eqnarray}
\label{eq:sh}& &F_{2n+2}(x)=\\
\notag& &F_{2n}(x)+(2\varepsilon)^2[-2(F_{2n}(x))^3+3(F_{2n}(x))^2-F_{2n}(x)]+(2\varepsilon)^3[-(F_{2n}(x)-(F_{2n}(x))^2)^2].
\end{eqnarray}
Let 
$$\varphi^\varepsilon(x)=x+(2\varepsilon)^2[-2x^3+3x^2-x]+(2\varepsilon)^3[-(x-x^2)^2].$$

Simple algebraic computations show that the function $\varphi^\varepsilon$ has four fixed points
\begin{align*}
x_1=0, x_2=1, x_3 = \frac{\sqrt{1+\varepsilon^2}-1+\varepsilon}{2\varepsilon}=\frac{1}{\sqrt{1+\varepsilon^2}+1-\varepsilon}, x_4=-\frac{\sqrt{1+\varepsilon^2}+1-\varepsilon}{2\varepsilon}.
\end{align*}
Among these fixed points only $x_1,x_2$ and $x_3$ are in the segment $[0,1]$. We denote $b^\varepsilon = x_3(\varepsilon)$, and we note that $b^\varepsilon>\frac{1}{2}$ and $\lim_{\varepsilon\rightarrow 0} b^\varepsilon=\frac{1}{2}$. Note also that $\varphi^\varepsilon(x)<x$ for $0<x<b^\varepsilon$ (for instance because $\frac{d \varphi^\varepsilon}{dx}(0) = 1-4\varepsilon^2<1$) 
%$\phi^\varepsilon(\frac{1}{2})=\frac{1}{2}-\frac{\varepsilon^3}{2}<\frac{1}{2}$) 
and $\varphi^\varepsilon(x)>x$ for $b^\varepsilon<x<1$ (for instance because $\frac{d \varphi^\varepsilon}{dx}(1) = 1-4\varepsilon^2<1$). Figure \ref{fig:phi-eps} illustrates the function $\varphi^\varepsilon$ for $\varepsilon=\frac{1}{4}$.

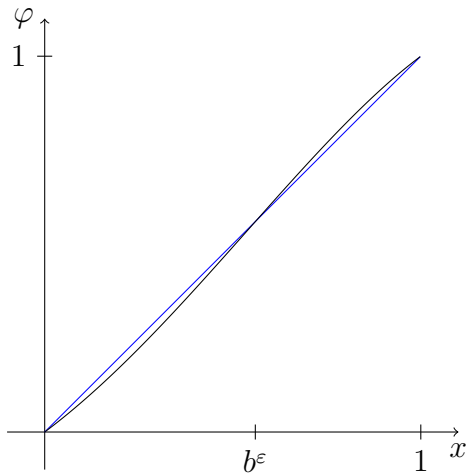
\begin{figure}[h]
\centering
\caption{The function $\varphi^\varepsilon(x)$ for $\varepsilon=\frac{1}{4}$.}\label{fig:phi-eps}
\begin{tikzpicture}[xscale=5,yscale=5,domain=0:1,samples=100]\label{fig:2x2-x4}
    \draw[->] (-0.1,0) -- (1.1,0) node[below] {$x$};
    \draw[->] (0,-0.1) -- (0,1.1) node[left] {$\varphi$};
    \draw (1,0.02) -- (1,-0.02) node[below] {$1$};
    \draw (0.56,0.02) -- (0.56,-0.02) node[below] {$b^\varepsilon$};
    \draw (0.02,1) -- (-0.02,1) node[left] {$1$};
    \draw[blue] plot (\x,{\x});
    \draw[black] plot (\x,{-0.125*\x*\x*\x*\x -0.25*\x*\x*\x +0.625*\x*\x + 0.75*\x});
\end{tikzpicture}
\end{figure}

One can see by equation \eqref{eq:sh} that  
$$F_{2n+2}(x)=\varphi^\varepsilon(F_{2n}(x)).$$
Similar arguments to those applied in Theorem \ref{th:zs} imply that $\val_n(H^\varepsilon_n)$ weakly converges to $b_\varepsilon$ as $n$ grows.
\end{proof}

%The above function $\phi^\varepsilon$ has three fixed points in $[0,1]$: $0,1$ and an intermediate point $b^\varepsilon>\frac{1}{2}$ that approaches $\frac{1}{2}$ when $\varepsilon$ approaches $0$. Moreover $\phi^\varepsilon$ is an $S$ shaped curve that satisfy $0<\phi^\varepsilon(x)<x$ for every $x\in(0,b^\varepsilon)$ and, $1>\phi^\varepsilon(x)>x$ for every $x\in(b^\varepsilon,1)$. For the zero-sum case where $C=[0,1]$, a similar characterization to the one applied in Theorem \ref{th:zs} implies that $\val_n(R^\varepsilon_n)$ weakly converges to the point $b_\varepsilon$ as $n$ grows. 

The following theorem studies the asymptotic properties of $\val_n(H^\varepsilon_n)$ for the general convex domain $C$.
\begin{theorem}\label{theo:bar}
For every $\varepsilon$ the value of the hybrid game concentrates around a Pareto efficient point $(x^\varepsilon,y^\varepsilon)$. Moreover, if the domain $C$ is symmetric then $\lim_{\varepsilon\rightarrow 0} x^\varepsilon - y^\varepsilon =0$; namely, the concentration point approaches the diagonal when $\varepsilon$ goes to $0$.  
\end{theorem}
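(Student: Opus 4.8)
The plan is to combine the two theorems already proved for the extremes of the hybrid family with the structural insight that the hybrid game is a convex mixture of the alternating and random models. Concretely, I would first establish the recursive formula for the sequence of measures $(\nu_n)_n$ associated with $\val(H^\varepsilon_n)$, generalizing equations \eqref{eq:rec-sym} and the zero-sum formula \eqref{zs:ev}--\eqref{zs:od}. At each node of depth $n-k$ with $k$ odd, Player $1$ controls with probability $\tfrac{1}{2}+\varepsilon$ (choosing the outcome maximizing her coordinate), and Player $2$ controls with probability $\tfrac{1}{2}-\varepsilon$. This yields, for odd steps,
\begin{equation*}
\nu_{n+1}=\left(\tfrac{1}{2}+\varepsilon\right)\max\nolimits_1(\nu_n,\nu_n)+\left(\tfrac{1}{2}-\varepsilon\right)\max\nolimits_2(\nu_n,\nu_n),
\end{equation*}
and symmetrically with the roles of the players (and the weights) interchanged for even steps.

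The heart of the argument is the concentration claim. I would adapt the method of Theorem \ref{th:gd}: define $x_n$ and $y_n$ as the $b^\varepsilon$-quantiles of the marginal CDFs $F^1_n$ and $F^2_n$, replacing the critical threshold $b$ from the alternating case by the hybrid fixed point $b^\varepsilon=\tfrac{\sqrt{1+\varepsilon^2}-1+\varepsilon}{2\varepsilon}$. The key monotonicity lemma (analogous to Lemma \ref{mon:lemma}) should follow because, after a full two-step iteration (one step favoring Player $1$, one favoring Player $2$), the marginal CDF at a fixed threshold evolves according to the map $\varphi^\varepsilon$, which has $b^\varepsilon$ as its unique interior fixed point with $\varphi^\varepsilon(x)<x$ below $b^\varepsilon$ and $\varphi^\varepsilon(x)>x$ above. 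Thus for a fixed $x<x^*:=\lim_n x_{2n}$ the marginal mass $\mu_n(C_1(x))$ iterates under $\varphi^\varepsilon$ down to its attracting fixed point $0$, giving $\lim_n\nu_n(C_1(x))=0$, and symmetrically $\lim_n\nu_n(C_2(y))=0$ for $y<y^*$. To show $(x^*,y^*)$ is Pareto efficient, I would reuse the contradiction argument of Theorem \ref{th:gd}: if the point were dominated by $(x',y')$ with $\mu_0(C_{\geq\geq}(x',y'))>0$, then at every step a constant-factor growth of $\nu_n(C_{\geq\geq}(x',y'))$ would occur, forcing the mass above $1$. The relevant constant is now governed by $b^\varepsilon$ rather than $b$, but since $b^\varepsilon>\tfrac{1}{2}$ for all $\varepsilon>0$, the growth factor $2b^\varepsilon>1$ persists, so the contradiction goes through.

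For the symmetry-in-the-limit claim, I would argue by a continuity-in-$\varepsilon$ estimate on the concentration point. The cleanest route is to bound the asymmetry $x^\varepsilon-y^\varepsilon$ by tracking how far the two-step map deviates from the symmetric random model $R_n$ as $\varepsilon\to 0$. Since $H^0_n=R_n$ and Proposition \ref{pro:gen} shows $R_n$ concentrates on the whole Pareto boundary symmetrically (no coordinate is favored when $\varepsilon=0$), the advantage that Player $1$ accrues from acting with slightly higher probability at the odd (last) layers scales with $\varepsilon$. For a symmetric domain $C$, the defining equations for $x^\varepsilon$ and $y^\varepsilon$ are interchanged under $\varepsilon\mapsto$ swapping the players, so the gap $x^\varepsilon-y^\varepsilon$ is an odd-type quantity in the perturbation that vanishes as $\varepsilon\to 0$; quantitatively I would show that the per-step bias introduced by $\varepsilon$ is $O(\varepsilon)$ and that this controls $|x^\varepsilon-y^\varepsilon|$.

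The main obstacle I anticipate is the symmetry claim rather than the concentration, which is essentially a transcription of Theorem \ref{th:gd} with $b$ replaced by $b^\varepsilon$. The difficulty is that $(x^*,y^*)$ is defined through an infinite iteration whose fixed-point location depends on $\varepsilon$ in a nontrivial geometric way once $C$ is a genuinely two-dimensional convex body, so establishing $\lim_{\varepsilon\to 0}(x^\varepsilon-y^\varepsilon)=0$ requires a uniform (in $n$) control on the convergence of the iterates as $\varepsilon$ varies, together with the symmetry of the limiting dynamics at $\varepsilon=0$. I would handle this by exploiting that for symmetric $C$ the map governing $(F^1_n,F^2_n)$ is equivariant under the coordinate swap combined with $\varepsilon\mapsto-\varepsilon$, which pins the diagonal as the fixed locus at $\varepsilon=0$ and yields the desired limit by continuity.
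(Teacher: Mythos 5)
Your treatment of the first claim (concentration around a Pareto efficient point) matches the paper, which itself disposes of it by rerunning the proof of Theorem \ref{th:gd} with $b$ replaced by $b^\varepsilon$, using the monotone quantile sequences and the growth-factor contradiction; that part is fine. The genuine gap is in the symmetry claim. Your argument there reduces to ``the dynamics are equivariant under swapping the players together with $\varepsilon\mapsto-\varepsilon$, so the gap $x^\varepsilon-y^\varepsilon$ is odd in the perturbation and vanishes by continuity.'' Oddness only tells you $(x^{-\varepsilon},y^{-\varepsilon})=(y^\varepsilon,x^\varepsilon)$; it says nothing about the limit as $\varepsilon\to0^+$ unless the concentration point is continuous at $\varepsilon=0$, and that is exactly what cannot be taken for granted: the order of limits is $n\to\infty$ first, then $\varepsilon\to0$, and the limiting behavior is genuinely discontinuous at $\varepsilon=0$ --- for every $\varepsilon>0$ the value concentrates at a point, while at $\varepsilon=0$ it does not concentrate at all (Propositions \ref{pro:zs} and \ref{pro:gen}). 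You correctly flag that one needs ``uniform (in $n$) control on the convergence of the iterates as $\varepsilon$ varies,'' but you never supply it, and no such uniform control holds in the naive sense precisely because of this discontinuity. The per-step bias being $O(\varepsilon)$ is compounded over infinitely many steps, so it does not by itself bound $|x^\varepsilon-y^\varepsilon|$.

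The paper's route around this is the missing idea. It first shows (Proposition \ref{pro:sym}) that for the purely random game $R_n$ the medians $(x_n,y_n)$ of the marginals increase monotonically to a point $(x,y)$ on the Pareto efficient boundary. Then, for a \emph{fixed finite} depth $n_0$, it proves total-variation continuity of $\val(H^\varepsilon_{n_0})$ in $\varepsilon$ at $0$ (Lemma \ref{lem:str-con}, a simple coupling of the controlling players), which together with $b^\varepsilon\to\tfrac12$ shows that the $b^\varepsilon$-quantiles $(x^\varepsilon_{2n_0},y^\varepsilon_{2n_0+1})$ are close to $(x_{2n_0},y_{2n_0+1})$ and hence to $(x,y)$. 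Finally, monotonicity gives that the concentration point $(x^\varepsilon,y^\varepsilon)$ Pareto-dominates $(x^\varepsilon_{2n_0},y^\varepsilon_{2n_0+1})$, and since both $(x^\varepsilon,y^\varepsilon)$ and $(x,y)$ lie on the Pareto boundary, the Lipschitz property of that boundary converts ``dominates a point $\delta$-close to $(x,y)$'' into ``is $O(\delta)$-close to $(x,y)$.'' This argument needs continuity in $\varepsilon$ only at finitely many depths, never uniformly in $n$, and it proves the stronger statement $\lim_{\varepsilon\to0}(x^\varepsilon,y^\varepsilon)=(x,y)$ for general $C$; the symmetric case then follows because $x_n=y_n$ for symmetric $C$. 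Without some substitute for this geometric step (monotone domination plus Pareto efficiency plus Lipschitz boundary), your symmetry argument does not close.
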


Before establishing the proof of the Theorem (Section \ref{sec:bar-pr}) we present more formally the applications of this result to bargaining and implementation.

\subsection{Application to Bargaining}\label{section:atb}

%Unlike the standard bargaining problem, we do not have a disagreement point, but only a bargaining set $C$. 
Given a bargaining set\footnote{Unlike standard bargaining models, in our case a bargaining problem contains a bargaining set only, without a disagreement point.} we define the \emph{random-extensive-form (REF) solution} to be 
\begin{align*}
\lim_{\varepsilon \rightarrow 0} \lim_{n \rightarrow \infty} \val (H^\varepsilon_n).
\end{align*}
Note that the REF solution can also be derived from a more natural version of a random extinctive form game $R_n$, where the REF solution is the limit point of the medians of the marginal distributions (see Proposition \ref{pro:second-part}).

We recall that a solution is called \emph{standard} if it satisfies Pareto efficiency, Symmetry, and Invariance with respect to positive affine transformations.
\begin{corollary}
The REF solution is a standard solution.
\end{corollary}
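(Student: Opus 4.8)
The plan is to verify the three defining properties of a standard solution---Pareto efficiency, Symmetry, and Invariance with respect to positive affine transformations---for the REF solution $s(C):=\lim_{\varepsilon\to 0}\lim_{n\to\infty}\val(H^\varepsilon_n)$, where the inner limit over $n$ is understood as the concentration point guaranteed by Theorem \ref{theo:bar}. Two of these properties follow almost immediately from what has already been established, so the bulk of the work is to confirm that the limits are well-defined and that affine invariance holds.

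First I would dispatch Pareto efficiency and Symmetry. By Theorem \ref{theo:bar}, for every fixed $\varepsilon>0$ the inner limit $(x^\varepsilon,y^\varepsilon)=\lim_{n\to\infty}\val(H^\varepsilon_n)$ exists and lies on the Pareto efficient boundary of $C$. Since the Pareto boundary of a compact convex set is closed, the outer limit $\lim_{\varepsilon\to 0}(x^\varepsilon,y^\varepsilon)$, if it exists, again lies on the boundary, giving efficiency. For symmetry, Theorem \ref{theo:bar} states that when $C$ is symmetric we have $\lim_{\varepsilon\to 0}(x^\varepsilon-y^\varepsilon)=0$, so the REF solution lies on the diagonal; combined with Pareto efficiency this pins it down as the unique symmetric efficient point, which is exactly what the Symmetry axiom demands.

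Next I would establish Invariance under positive affine transformations. Here the key structural observation is that the construction of $H^\varepsilon_n$ depends on $C$ only through the uniform draw $\mu_0$ over $C$ and through the order relations ``$\max$'' and ``$\min$'' on each coordinate used in backward induction. A positive affine map $T(x,y)=(a_1 x+b_1,\,a_2 y+b_2)$ with $a_1,a_2>0$ pushes the uniform distribution on $C$ forward to the uniform distribution on $T(C)$ (up to the Jacobian constant, which is irrelevant since it is a probability measure), and it preserves the coordinatewise order since each $a_i>0$. Consequently backward induction commutes with $T$: the subgame perfect value of the transformed game is exactly the image under $T$ of the value of the original game, so $\val(H^\varepsilon_n)(T(C))=T\big(\val(H^\varepsilon_n)(C)\big)$ as random variables. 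I would then argue that weak convergence is preserved under the continuous map $T$ (by the continuous mapping theorem), so the concentration point transforms as $(x^\varepsilon,y^\varepsilon)(T(C))=T\big((x^\varepsilon,y^\varepsilon)(C)\big)$, and passing to the limit $\varepsilon\to 0$ yields $s(T(C))=T(s(C))$.

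The main obstacle I anticipate is justifying the existence of the outer limit $\lim_{\varepsilon\to 0}(x^\varepsilon,y^\varepsilon)$ in the general (non-symmetric) case, since Theorem \ref{theo:bar} as stated only guarantees the inner concentration point for each fixed $\varepsilon$ and controls the symmetric case in the limit. I would handle this by noting that the points $(x^\varepsilon,y^\varepsilon)$ all lie in the compact Pareto boundary, so along any sequence $\varepsilon\to 0$ there is a convergent subsequence; the affine-invariance argument above is robust enough that it applies to whatever limit point is selected, and for the purpose of the corollary one may either take $s(C)$ to be this limit (appealing to a monotonicity or continuity argument in $\varepsilon$ to see it is unique) or restrict the claim to the cases where the limit is known to exist. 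I expect the cleanest route is to observe that affine invariance and symmetry already force the value on every symmetric $C$, and then use invariance to transport the solution to arbitrary affine images of symmetric sets, leaving only the genuinely asymmetric sets to be covered by the compactness-plus-subsequence argument.
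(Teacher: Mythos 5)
Your proposal is correct and follows essentially the same route as the paper, which likewise derives Pareto efficiency and symmetry directly from Theorem \ref{theo:bar} and obtains affine invariance from the facts that the uniform distribution stays uniform under a positive affine map and that subgame perfect (backward induction) outcomes commute with coordinatewise increasing transformations. The one gap you flag---existence of the outer limit $\lim_{\varepsilon\to 0}(x^\varepsilon,y^\varepsilon)$ for general $C$---is already settled in the paper by Proposition \ref{pro:second-part}, which identifies this limit with the well-defined limit $(x,y)$ of the medians of $\val(R_n)$, so your compactness-plus-subsequence workaround is unnecessary.
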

\begin{proof}
Pareto efficiency and symmetry follow immediately from Theorem \ref{theo:bar}. For invariance with respect to positive affine transformations we note that the uniform distribution remains uniform after operating such a transformation. In addition, the subgame perfect equilibria outcomes are invariant to such transformations.
\end{proof}

Beyond the fact that the REF solution satisfies several desirable axioms, the main advantage of the REF solution is the fact that this solution has a very clean probabilistic (approximate) implementation. We set small $\varepsilon$ and large $n$. 
%(latter we discuss who to set $\varepsilon$ and $n$). 
Now we simply randomize the decision makers and the outcomes in the game as it is done in the hybrid game $H^\varepsilon_n$. With high probability the subgame perfect equilibrium outcome of the game is close to the REF solution.
%i.i.d uniformly at random  Assume that we want the solution to be within a distance of $\delta$ from the REF solution with probability $1-\delta$. We first set $\varepsilon$ that guarantees that the concentration point of $H^\varepsilon_n$ will be $\frac{\delta}{2}$-close to the REF solution (this is possible by Proposition \ref{pro:second-part}). Now we set large enough $n$ that guarantees that $1-\delta$ of the mass is concentrated within a distance of $\frac{\delta}{2}$ from the concentration point.

\subsection{Proof of Theorem \ref{theo:bar}}\label{sec:bar-pr}
We start with a proof of the first part of Theorem \ref{theo:bar} (regarding the concentration around a Pareto efficient point). In fact, we derive a stronger statement which provides also a type of characterization for the concentration point.

We denote by $\chi_n^\varepsilon$ the probability distribution of $\val(H^n_\varepsilon)$ and for $i=1,2$ we let $F^{i,\varepsilon}_n$ be the CDF of the marginal distribution of $\nu_n^\varepsilon$ over the payoffs to Player $i$. We define $x^\varepsilon_n$ to be the unique number $x$ such that $F^{1,\varepsilon}_n(x)=b^\varepsilon$. Similarly, $y_n^\varepsilon$ is the number $y$ such that $F^{2,\varepsilon}(y)=b_\varepsilon$. 
\begin{proposition}\label{pro:first-part}
The sequences $x^\varepsilon_n$ and $y^\varepsilon_n$ have the following properties.
\begin{enumerate}
\item[(A1)] $\{x^\varepsilon_{2n}\}_n$ and $\{y^\varepsilon_{2n+1}\}_n$ are monotonically increasing in $n$.
\item[(A2)] $\{(x^\varepsilon_n,y^\varepsilon_n)\}_n$ is a converging sequence.
\item[(A3)] $(x^\varepsilon_n,y^\varepsilon_n)\underset{n\longrightarrow \infty}{\rightarrow} (x^\varepsilon,y^\varepsilon)$, namely, the limit of the sequence is the concentration point of $\val(H^n_\varepsilon)$.
\item[(A4)] $(x^\varepsilon,y^\varepsilon)$ lies on the Pareto efficient boundary of $C$.
\end{enumerate}
\end{proposition}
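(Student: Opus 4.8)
The plan is to rerun, in the two–dimensional domain, the marginal–quantile analysis that proves Theorem~\ref{th:gd}, with the single–coordinate map $\varphi$ replaced by the two–step map $\varphi^\varepsilon=\psi_-\circ\psi_+$ assembled from the one–level operations $\psi_+(t)=(\tfrac12+\varepsilon)t^2+(\tfrac12-\varepsilon)(2t-t^2)$ and $\psi_-(t)=(\tfrac12-\varepsilon)t^2+(\tfrac12+\varepsilon)(2t-t^2)$ of \eqref{zs:ev}--\eqref{zs:od}. The only tool is the selection bound already used in Theorem~\ref{th:gd}: for a set cut out by one coordinate threshold the chosen child lies in $\{c_1\le x\}$ exactly when both children do if Player~1 moves, and only if at least one child does if Player~2 moves (and symmetrically for Player~2). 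Writing $F^1_n$ for the first marginal CDF of $\chi^\varepsilon_n$, a Player~1–favoured level then satisfies $F^1_{n+1}(x)\le\psi_+(F^1_n(x))$ and a Player~2–favoured level $F^1_{n+1}(x)\le\psi_-(F^1_n(x))$, whence $F^1_{2n+2}(x)\le\varphi^\varepsilon(F^1_{2n}(x))$. Since $\varphi^\varepsilon(b^\varepsilon)=b^\varepsilon$ and $F^1_{2n}(x^\varepsilon_{2n})=b^\varepsilon$, this gives $F^1_{2n+2}(x^\varepsilon_{2n})\le b^\varepsilon$, hence $x^\varepsilon_{2n+2}\ge x^\varepsilon_{2n}$; the mirror computation on $F^2$ across an odd–even–odd window gives $y^\varepsilon_{2n+1}\uparrow$, which is (A1). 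Exactly as after \eqref{eq:av1}, iterating $\varphi^\varepsilon$ (which obeys $\varphi^\varepsilon(t)<t$ on $(0,b^\varepsilon)$, with $0$ its only fixed point below $b^\varepsilon$) yields $\lim_n\chi^\varepsilon_n(C_1(x))=0$ for every $x<x^\ast:=\lim_n x^\varepsilon_{2n}$ and $\lim_n\chi^\varepsilon_n(C_2(y))=0$ for every $y<y^\ast:=\lim_n y^\varepsilon_{2n+1}$.

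The heart of the argument is (A4), and it is exactly here that a naive estimate fails: copying Theorem~\ref{th:gd} verbatim yields a one–step growth factor $(1+2\varepsilon)b^\varepsilon$, which is $<1$ for small $\varepsilon$, so the mass above $(x^\ast,y^\ast)$ cannot be shown to blow up that way. The fix is to measure the growth of $D^\ast:=C_{>>}(x^\ast,y^\ast)$ with thresholds at the limiting corner and to keep the \emph{full} refined recursion. Decompose the four quadrants at $(x^\ast,y^\ast)$ into $a=\chi^\varepsilon_n(D^\ast)$, $p=\chi^\varepsilon_n(C_{\le\le}(x^\ast,y^\ast))$, $b_1=\chi^\varepsilon_n(C_{\le>}(x^\ast,y^\ast))$, $b_2=\chi^\varepsilon_n(C_{>\le}(x^\ast,y^\ast))$. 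Counting the disjoint favourable events (both children in $D^\ast$; one child in $D^\ast$ and the other in $C_1(x^\ast)$ with Player~1 in control; one child in $D^\ast$ and the other in $C_2(y^\ast)$ with Player~2 in control) gives, at a level where Player~1 controls with probability $q$, $\chi^\varepsilon_{n+1}(D^\ast)\ge a\,[\,1+p+(2q-1)(b_1-b_2)\,]$. The key observation is that the asymmetry now \emph{helps}: at a Player~1–favoured level $q=\tfrac12+\varepsilon$, the source measure is some $\chi^\varepsilon_{2n}$ with $\chi^\varepsilon_{2n}(C_1(x^\ast))=b_1+p\ge b^\varepsilon$ (since $x^\varepsilon_{2n}\le x^\ast$) while $b_2=1-\chi^\varepsilon_{2n}(C_1(x^\ast))-a\le 1-b^\varepsilon$, so $b_1-b_2\ge 2b^\varepsilon-1-p$ and the bracket is at least $1+p(1-2\varepsilon)+2\varepsilon(2b^\varepsilon-1)\ge 1+2\varepsilon(2b^\varepsilon-1)=:\rho$. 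The controlling player's own coordinate carries at least $b^\varepsilon>\tfrac12$ of the mass on the low side of the corner, which is precisely what makes the sign work. The mirror computation at Player~2–favoured levels, using $\chi^\varepsilon_{2n+1}(C_2(y^\ast))\ge b^\varepsilon$, gives the same $\rho>1$. Hence $\chi^\varepsilon_n(D^\ast)\ge\rho^{\,n}\mu_0(D^\ast)$ for all $n$; if $(x^\ast,y^\ast)$ were strictly dominated then $\mu_0(D^\ast)>0$, contradicting $\chi^\varepsilon_n(D^\ast)\le1$. Thus $(x^\ast,y^\ast)$ is Pareto efficient, i.e.\ (A4). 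Note this step uses only (A1), not the vanishing statements, and it recovers the factor $2b$ of Theorem~\ref{th:gd} at $\varepsilon=\tfrac12$.

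With (A4) the concentration follows as in the last paragraph of Theorem~\ref{th:gd}: the two vanishing limits give $\chi^\varepsilon_n(C_{\ge\ge}(x,y))\to1$ for all $x<x^\ast,\,y<y^\ast$, and since $(x^\ast,y^\ast)$ is efficient the set $\{c\in C:c_1\ge x^\ast,\,c_2\ge y^\ast\}$ is the single point $(x^\ast,y^\ast)$, so every open neighbourhood eventually carries mass $\to1$ and $\chi^\varepsilon_n\Rightarrow\delta_{(x^\ast,y^\ast)}$; this identifies $(x^\varepsilon,y^\varepsilon)=(x^\ast,y^\ast)$ as the concentration point, which is (A3). Finally, once $\val(H^\varepsilon_n)$ concentrates at a point both marginals concentrate, so every quantile sequence — in particular the off–parity ones $x^\varepsilon_{2n+1}$ and $y^\varepsilon_{2n}$ — converges to $x^\ast$, resp.\ $y^\ast$, which gives the convergence of the full pair sequence in (A2). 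The one genuinely delicate point, and the step I expect to require the most care to write rigorously, is the positivity of the growth factor in (A4): it hinges on pinning the thresholds at the limiting corner and on the elementary but easily mis-signed bookkeeping of the four quadrant masses together with the parity-dependent quantile bounds $x^\varepsilon_{2n}\le x^\ast$ and $y^\varepsilon_{2n+1}\le y^\ast$.
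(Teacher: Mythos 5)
Your proposal is correct, and it is in fact more careful than the paper at the one point where care is genuinely needed. The paper's entire proof of this proposition is the sentence ``by applying identical considerations to those in the proof of Theorem~\ref{th:gd}, and replacing $b$ by $b^\varepsilon$''; your write-up shows that this is accurate for (A1)--(A3) (the selection bound giving $F^1_{2n+2}(x)\le\varphi^\varepsilon(F^1_{2n}(x))$, the fixed-point dynamics of $\varphi^\varepsilon$ below and above $b^\varepsilon$, and the vanishing of mass below the limiting thresholds all transfer verbatim), but that the Pareto-efficiency step does not: the literal analogue of \eqref{eq:rec2} only credits the favoured player's branch and yields a growth factor $(1+2\varepsilon)b^\varepsilon$, which tends to $\tfrac12$ as $\varepsilon\to 0$, so the paper's contradiction evaporates exactly in the regime ($\varepsilon$ small) that matters for Theorem~\ref{theo:bar}. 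Your repair --- keeping the full four-quadrant bookkeeping at $(x^\ast,y^\ast)$, using $a+p+b_1+b_2=1$ to reduce the one-step factor to $1+p+(2q-1)(b_1-b_2)$, and then invoking the parity-dependent quantile bounds $F^{1,\varepsilon}_{2n}(x^\ast)\ge b^\varepsilon$ and $F^{2,\varepsilon}_{2n+1}(y^\ast)\ge b^\varepsilon$ together with $b^\varepsilon>\tfrac12$ to get the uniform factor $\rho=1+2\varepsilon(2b^\varepsilon-1)>1$ --- is correct (I checked the algebra, including that it degenerates to the paper's $2b$ at $\varepsilon=\tfrac12$), and it is the ingredient the paper's ``identical considerations'' silently requires. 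The remaining steps (deducing weak convergence to $\delta_{(x^\ast,y^\ast)}$ from the two vanishing limits plus efficiency, and then convergence of all quantile sequences, hence of the off-parity subsequences in (A2)) match the paper's Theorem~\ref{th:gd} template. In short: same architecture, but you have supplied a necessary strengthening of the growth-factor lemma that the published proof omits.
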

\begin{proof}
By applying identical considerations to those in the proof of Theorem \ref{th:gd}, and replacing $b$ by $b^\varepsilon$ we get exactly all these properties.
\end{proof}

Now we get to the proof of the second statement in Theorem \ref{theo:bar}. The proof is based on a connection between the sequences of measures $\val(R_n)$ and $\val(H^\varepsilon_n)$ (see Proposition \ref{pro:second-part}). In order to state this connection we need an additional convergence statement about the random structure games $\val(R_n)$.

%We present here several notations for the sequence of measures $\{\mu_n\}_n=\{\val(R_n)\}_n$. 
We denote by $F^1_n,F^2_n$ the CDFs of the marginal distribution of $\nu_n=\val(R_n)$ over the payoffs to Player $1,2$. We denote by $x_n$ and $y_n$ the medians of $F^1_n$ and $F^2_n$ (respectively).

%Before getting to the proof of Theorem \ref{theo:bar}, we introduce some properties of the sequence of measures $\{\mu_n\}_n=\{\val(R_n)\}$ for the case of a (symmetric) random structure. These properties will be useful in the proof of Theorem \ref{theo:bar}.
\begin{proposition}\label{pro:sym}
The sequences $x_n$ and $y_n$ are monotonically increasing. Moreover
the limit point $(x,y)=\lim_{n\rightarrow \infty} (x_n,y_n)$ lies on the Pareto efficient boundary of\footnote{Unlike other structures of random games discussed in this paper, in the (symmetric) random structure we do not have a concentration of the limit measure around the point $(x,y)$. Nevertheless, the proposition states that the limit point $(x,y)$ is well defined, and is Pareto efficient.} $C$.   
\end{proposition}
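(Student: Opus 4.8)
The plan is to treat the two assertions separately. For the monotonicity of $x_n,y_n$ I will prove the stronger fact that the coordinate marginals are \emph{stochastically increasing}, i.e. that for every fixed $x$ the number $F^1_n(x)$ is non-increasing in $n$ (and symmetrically for $F^2_n(y)$); monotonicity of the medians is then immediate. Writing the recursion \eqref{eq:rec-sym} coordinate-wise, the $\max_1$ term contributes $(F^1_n(x))^2$ to $F^1_{n+1}(x)$ (the selected draw has first coordinate $\le x$ iff both draws do), while the $\max_2$ term contributes $G_n(x):=2\Pr(X_1\le x,\ Y_1>Y_2)$, the probability that the draw with the larger second coordinate happens to be low in the first. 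Hence
\[
F^1_{n+1}(x)=\tfrac12(F^1_n(x))^2+\tfrac12 G_n(x).
\]
The crux is the bound on $G_n$ obtained by splitting on whether the other draw is also low in coordinate $1$: by exchangeability $\Pr(X_1\le x,\ X_2\le x,\ Y_1>Y_2)=\tfrac12(F^1_n(x))^2$, while $\Pr(X_1\le x,\ X_2>x,\ Y_1>Y_2)\le F^1_n(x)(1-F^1_n(x))$, so $G_n(x)\le 2F^1_n(x)-(F^1_n(x))^2$. Substituting yields $F^1_{n+1}(x)\le F^1_n(x)$. All ties have probability zero because $\nu_n$ is non-atomic, a property inherited from $\mu_0$ through the $\max$/selection operations; note the bound holds with equality in the zero-sum case, recovering Proposition \ref{pro:zs}.

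Since $F^1_n(x)$ is non-increasing in $n$ for every $x$, every quantile of the first marginal is non-decreasing; in particular the medians $x_n$ are non-decreasing, and symmetrically $y_n$. Both sequences are bounded, since every $\nu_n$ is supported on the compact set $C$, so $x_n\uparrow x^*$ and $y_n\uparrow y^*$ converge. This establishes the first assertion and the existence of the limit point $(x,y)=(x^*,y^*)$.

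It remains to place $(x^*,y^*)$ on the efficient boundary. First, $(x^*,y^*)$ is not strictly dominated: if some $(x',y')$ with $x'>x^*$, $y'>y^*$ had $\mu_0(C_{>>}(x',y'))>0$, then $x_n\le x^*<x'$ and $y_n\le y^*<y'$ give $F^1_n(x')\ge F^1_n(x_n)=\tfrac12$ and $F^2_n(y')\ge\tfrac12$, so the recursion \eqref{eq:im1} gives
\[
\nu_{n+1}(C_{>>}(x',y'))\ \ge\ \nu_n(C_{>>}(x',y'))\bigl(1+\nu_n(C_{>>}(x',y'))\bigr).
\]
This makes $\nu_n(C_{>>}(x',y'))$ non-decreasing, hence bounded below by $\mu_0(C_{>>}(x',y'))>0$, and then geometrically growing, contradicting $\nu_n(C_{>>}(x',y'))\le 1$. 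Second, to see that $(x^*,y^*)$ lies \emph{on} the boundary I will invoke the concentration on the Pareto frontier from Proposition \ref{pro:gen}. Parametrising the frontier near $x^*$ as the graph of a strictly decreasing $Y=g(X)$, on the frontier the event $\{X\le x^*\}$ coincides with $\{Y\ge g(x^*)\}$. Since all but an $\eta$-fraction of the mass of $\nu_n$ lies within $\varepsilon$ of the frontier while $\nu_n(\{X\le x_n\})=\tfrac12$ with $x_n\to x^*$, the mass of $\{Y\ge g(x^*)\}$ also tends to $\tfrac12$; hence $g(x^*)$ is the limit of the $Y$-medians, so $y^*=g(x^*)$ and $(x^*,y^*)$ is on the frontier.

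The main obstacle is this last step. The clean identity ``$\{X\le x^*\}=\{Y\ge g(x^*)\}$'' is valid only on the frontier, so one must control the off-frontier mass uniformly (supplied by Proposition \ref{pro:gen}) and also ensure that the limiting medians are unique and that $x^*$ lies in the interior of the range of the first coordinate, so that $g$ is genuinely strictly decreasing there. These regularity facts follow from non-atomicity and the positive density of $\mu_0$, but they are precisely where care is required; by contrast the stochastic-dominance step, though it is the conceptual heart of the monotonicity claim, becomes routine once the exchangeability bound on $G_n$ is in hand.
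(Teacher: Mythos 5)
Your first two steps are sound and essentially reproduce the paper's own argument. The stochastic\-/dominance computation---splitting $G_n(x)$ by exchangeability into $\tfrac12(F^1_n(x))^2$ plus a term bounded by $F^1_n(x)(1-F^1_n(x))$---is exactly the paper's observation that, when the \emph{other} player controls the root, the selected draw can have first coordinate $\le x$ only if at least one of the two draws does, i.e.\ $G_n(x)\le 1-(1-F^1_n(x))^2$; you prove the inequality $F^1_{n+1}(x)\le F^1_n(x)$ for every $x$ where the paper only evaluates it at $x=x_n$, a harmless (and mildly illuminating) strengthening. The non\-/domination step, using \eqref{eq:im1} together with $F^1_n(x')+F^2_n(y')\ge 1$ to get geometric growth of $\nu_n(C_{>>}(x',y'))$, is the paper's argument verbatim.

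The genuine gap is in your last step, which places $(x^*,y^*)$ \emph{on} the frontier via Proposition \ref{pro:gen}. What that route actually yields is, for every $\delta,\eta>0$ and all large $n$, only a bound of the form $\nu_n(\{Y\ge g(x^*)-\delta\})\ \ge\ \nu_n(\{X\le x^*\}\cap P_\varepsilon(C))\ \ge\ \tfrac12-\eta$, and symmetrically $\nu_n(\{Y\ge g(x^*)+\delta\})\le \tfrac12+\eta$. Bounds of the form $\tfrac12\pm\eta$ on the survival function at fixed thresholds do \emph{not} pin down the median: the median is not a continuous functional of the tail probability at a single point, and nothing you have established prevents the $Y$-marginal of $\nu_n$ from being nearly bimodal, with mass just under $\tfrac12$ accumulating near $g(x^*)$ and the median $y_n$ sitting at the bottom of the other mode, bounded away from $g(x^*)$. (Part of such a scenario is excluded by your non\-/domination step, but the direction ``$(x^*,y^*)$ is not strictly above the frontier'' is not, and that is precisely the direction your median argument must supply.) Your appeal to ``non-atomicity and positive density of $\mu_0$'' does not close this, since the difficulty concerns the asymptotic shape of $\nu_n$, not of $\mu_0$; likewise, strict monotonicity of $g$ is a geometric property of $C$, not of the initial distribution, and can degenerate. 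The paper avoids all of this with a short direct argument you should adopt: since $(x_n,y_n)$ increases to $(x,y)$ and $C$ is closed, if $(x,y)\notin C$ then $(x_{n'},y_{n'})\notin C$ for some finite $n'$; the sets $A_1=\{c\in C: c_1\ge x_{n'}\}$ and $A_2=\{c\in C: c_2\ge y_{n'}\}$ are then disjoint and each has $\nu_{n'}$-measure $\tfrac12$, forcing $\nu_{n'}(C_{<<}(x_{n'},y_{n'}))=0$ and contradicting the strictly positive density of $\nu_{n'}$ on $C$. Either substitute that argument or supply the missing uniform control of the marginals near their limiting medians.
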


\begin{proof}
For $a\in \mathbb{R}$ we denote $C_1(a)=\{c\in c: c_1<a\}$. In order to prove that $\{x_n\}_n$ is monotonically increasing it is sufficient to show that $\nu_{n+1}(C_1(x_n))\leq \frac{1}{2}$. 

With probability $\frac{1}{2}$ Player 1 controls the node, and in such a case, the result (at the $(n+1)$th iteration) is in $C_1(x_n)$ iff both i.i.d.\ realizations (according to $\nu_n$) are in $C_1(x_n)$. The probability of such an event is $(\nu_n(C_1(x_n)))^2=\frac{1}{4}$.

With probability $\frac{1}{2}$ Player 2 controls the node, and in such a case, a necessary condition for the result (at the $(n+1)$th iteration) to be in $C_1(x_n)$ is that at least one realization is in $C_1(x_n)$. The probability of such an event is $1-(\nu_n(C_1(x_n)))^2=\frac{3}{4}$.

Summarizing, over the two cases we get $\nu_{n+1}(C_1(x_n))\leq \frac{1}{2}\cdot \frac{1}{4} + \frac{1}{2}\cdot \frac{3}{4} =\frac{1}{2}$.

Similar arguments show that the sequence $\{y_n\}_n$ is monotonically increasing.

Now we turn to the proof of statement (2). 

Assume by way of contradiction that $(x,y)$ lies in the interior of $C$, and then $\nu_{0}(C_{>>}(x,y))>0$ (see Definition \ref{def:c}). By equation \eqref{eq:im1} (which holds for the random controlling player case) we have that for every $n,$
\begin{align*}
\nu_{n+1}(C_{>>}(x,y))\geq \nu_{n}(C_{>>}(x,y))(\nu_{n}(C_{>>}(x,y))+F^1_{n}(x)+F^2_n(y)).
\end{align*}
Since $x_n\leq x$ (and $y_n\leq y$) we have $F^1_{n}(x)+F^2_n(y)\geq 1$. Hence,
\begin{align*}
\nu_{n+1}(C_{>>}(x,y))\geq \nu_{n}(C_{>>}(x,y))(\nu_{n}(C_{>>}(x,y))+1),
\end{align*}
which contradicts the fact that $\nu_{n}(C_{>>}(x,y))\leq 1$ for every $n$.

Assume by way of contradiction that $(x,y)$ lies out of the $C$. Then for some finite $n'$ $(x_{n'},y_{n'})$ is out of the set $C$. We denote $A_1=\{c\in C: c_1 \geq x_{n'}\}$ and $A_2=\{c\in C: c_2 \geq y_{n'}\}$. Note that $A_1 \cap A_2 =\emptyset$, and $\nu_{n'}(A_1)=\nu_{n'}(A_2)=\frac{1}{2}$, which implies that $\nu_{n'}(C_{<<}(x_{n'},y_{n'}))=0$. This contradicts the fact that the density of $\nu_n$ remains strictly positive everywhere in $C$ for every finite $n$.

Therefore $(x,y)$ must lie on the Pareto efficient boundary.
\end{proof}

The connection between the sequences of measures $\val(R_n)$ and $\val(H^\varepsilon_n)$ is given by following proposition, which states that the concentration point of $\val(H^\varepsilon_n)$ is close to the limit of medians of $\val(R_n)$ from Proposition \ref{pro:sym}.
\begin{proposition}\label{pro:second-part}
$\lim_{\varepsilon \rightarrow 0} (x^\varepsilon,y^\varepsilon) = (x,y)$.
\end{proposition}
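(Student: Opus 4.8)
The plan is to read the statement as an interchange-of-limits problem. Writing $H^0_n=R_n$, the quantile sequence $x^\varepsilon_n$ (the point where $F^{1,\varepsilon}_n=b^\varepsilon$) is a two-parameter array whose two iterated limits we must compare: along $n\to\infty$ (for fixed $\varepsilon>0$) it converges to $x^\varepsilon$ by Proposition \ref{pro:first-part}, while along $\varepsilon\to 0$ (for fixed $n$) it should converge to the median $x_n$ of $\nu_n=\val(R_n)$, whose limit in $n$ is the point $x$ of Proposition \ref{pro:sym} (and symmetrically for the second coordinate). The goal $\lim_{\varepsilon\to0}(x^\varepsilon,y^\varepsilon)=(x,y)$ is exactly the equality of these two iterated limits. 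I would not attempt a brute-force Moore--Osgood argument, since the convergence $x^\varepsilon_n\to x^\varepsilon$ is visibly non-uniform in $\varepsilon$ (at $\varepsilon=0$ there is no concentration at all); instead I would obtain the result from a one-sided monotonicity bound together with the Pareto efficiency of $(x,y)$.

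First step (continuity in $\varepsilon$ at fixed $n$). For each fixed $n$ the law $\chi^\varepsilon_n$ of $\val(H^\varepsilon_n)$ depends continuously on $\varepsilon$: it is produced by $n$ applications of the operators $\max_1,\max_2$ weighted by the controller probabilities $\tfrac12\pm\varepsilon$, and at $\varepsilon=0$ these are exactly the random-game operators of \eqref{eq:rec-sym}, so $\chi^\varepsilon_n\Rightarrow\nu_n$ as $\varepsilon\to0$. Since $\nu_n$ has a strictly positive density on $C$ (as used in the proof of Proposition \ref{pro:sym}), its marginal CDFs are strictly increasing, so the quantile map is continuous there; combined with $b^\varepsilon\to\tfrac12$ this yields $x^\varepsilon_n\to x_n$ and $y^\varepsilon_n\to y_n$ as $\varepsilon\to0$, for every fixed $n$.

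Second step (lower bounds). By (A1) and (A3) the even subsequence $x^\varepsilon_{2m}$ increases to $x^\varepsilon$, so $x^\varepsilon\ge x^\varepsilon_{2m}$ for every $m$. Taking $\liminf_{\varepsilon\to0}$ and using the first step gives $\liminf_{\varepsilon\to0}x^\varepsilon\ge x_{2m}$ for every $m$; letting $m\to\infty$ (recall $x_{2m}\uparrow x$ by Proposition \ref{pro:sym}) yields $\liminf_{\varepsilon\to0}x^\varepsilon\ge x$. The symmetric argument applied to the odd subsequence $y^\varepsilon_{2m+1}$ gives $\liminf_{\varepsilon\to0}y^\varepsilon\ge y$.

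Third step (closing the gap via Pareto efficiency), which is where the genuine idea lies. Monotonicity only delivers the two lower bounds, so the crux is to rule out overshoot; here I use that $(x,y)$ is Pareto efficient (Proposition \ref{pro:sym}). Let $(X,Y)$ be any subsequential limit of $(x^\varepsilon,y^\varepsilon)$ as $\varepsilon\to0$; it exists because $(x^\varepsilon,y^\varepsilon)\in C$ with $C$ compact, and the second step forces $X\ge x$ and $Y\ge y$. If $(X,Y)\neq(x,y)$, then $(X,Y)$ weakly dominates $(x,y)$ with at least one strict inequality, contradicting the Pareto efficiency of $(x,y)$. Hence every subsequential limit equals $(x,y)$, i.e.\ $(x^\varepsilon,y^\varepsilon)\to(x,y)$. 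The main obstacle is exactly this last point: the monotonicity in $n$ is one-sided and cannot by itself control $x^\varepsilon$ from above, and it is the Pareto efficiency of the random-game limit $(x,y)$ --- converting the lower bound on $y^\varepsilon$ into an upper bound on $x^\varepsilon$, and vice versa --- that pins the double limit down.
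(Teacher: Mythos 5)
Your proposal is correct under the paper's standing assumptions and shares the paper's skeleton --- fixed-$n$ continuity in $\varepsilon$ (the paper's Lemma \ref{lem:str-con} plus the median/quantile comparison), the one-sided monotonicity bounds from (A1), (A3) and Proposition \ref{pro:sym} --- but your closing step is genuinely different. The paper closes the gap quantitatively: it fixes $\delta$, chains three $\frac{\delta}{3\lambda}$ estimates to place $(x^\varepsilon_{2n_0},y^\varepsilon_{2n_0+1})$ within $\frac{\delta}{\lambda}$ of $(x,y)$, and then uses the bi-Lipschitz property \eqref{eq:lip} of the Pareto frontier to convert the resulting lower bounds on $x^\varepsilon,y^\varepsilon$ into the two-sided bound $\|(x^\varepsilon,y^\varepsilon)-(x,y)\|_\infty\leq\delta$. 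You instead take a soft route: compactness of $C$ gives subsequential limits $(X,Y)$ with $X\geq x$, $Y\geq y$, and Pareto efficiency of $(x,y)$ forces $(X,Y)=(x,y)$. This is cleaner and dispenses with the explicit Lipschitz constant, but note where the Lipschitz hypothesis sneaks back in: your argument needs $(x,y)$ to be \emph{strongly} Pareto efficient (no point of $C$ weakly dominates it with one strict inequality), whereas the proof of Proposition \ref{pro:sym} literally only establishes $C_{>>}(x,y)=\emptyset$, i.e.\ weak efficiency. For a convex $C$ with a horizontal or vertical face through $(x,y)$ your contradiction would not fire (a limit with $X>x$, $Y=y$ is consistent with weak efficiency); it is exactly condition \eqref{eq:lip}, which forbids such faces on the frontier, that upgrades weak to strong efficiency and validates your step. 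With that one-line remark added, your proof is complete and arguably tidier than the paper's.
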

We emphasize that $\val(H^\varepsilon_n)$ and $\val(R_n)$ behave very differently for large $n$ (the first one concentrates around a point, whereas the latter does not). Therefore it is somewhat surprising that we succeeded to connect the concentration point of $\val(H^\varepsilon_n)$ with some object that is derived from the sequence $\{\val(R_n)\}$.

\begin{proof}[Proof of Proposition \ref{pro:second-part}]
Let $\lambda=\lambda(C)$ be a global Lipschitz constant such that for every two points $(a_1,a_2)$ and $(b_1,b_2)$ on the Pareto efficient boundary of $C$ holds\footnote{Actually we do not have to assume existence of a \emph{global} Lipschitz constant for the Pareto efficient boundary. Existence of a \emph{local} Lipschitz constant around the point $(x,y)$ will suffice for the arguments that we present here in the proof.}
\begin{align}\label{eq:lip}
\frac{1}{\lambda} |a_1-b_1| \leq |a_2-b_2|\leq \lambda |a_1-b_1|. 
\end{align}

We set $\delta>0$, and we shall prove that there exists $\varepsilon(\delta)>0$ such that for every $\varepsilon<\varepsilon(\delta)$ holds $||(x^\varepsilon,y^\varepsilon)-(x,y)||_\infty \leq \delta$. The way we do it is by considering the sequences $\{(x_n,y_n)\}_n$ and $\{(x^\varepsilon_n,y^\varepsilon_n)\}_n$.
%$\{(x^\varepsilon_n,y^\varepsilon_n)\}$ converges to $(x^\varepsilon,y^\varepsilon),$ (i.e., the limit of the sequence $\{(x^\varepsilon_n,y^\varepsilon_n)\}$ is the concentration point of $\val(R^\varepsilon_n)$) and that

By Proposition \ref{pro:sym} $\lim_n (x_n,y_n)=(x,y)$, so there exists $n_0$ such that 
\begin{align}\label{eq:xn-x}
|x_{2n_0}-x|\leq \frac{\delta}{3\lambda} \text{ and } |y_{2n_0+1}-y|\leq \frac{\delta}{3\lambda}.
\end{align}
We emphasize that $n_0$ does not depend on $\varepsilon$ (because $n_0$ depends only on a symmetric random structure where $\varepsilon$ is not involved).

We denote by $x_n^{med,\varepsilon}$ and $y_n^{med,\varepsilon}$ the medians of $F^{1,\varepsilon}_n$ and $F^{2,\varepsilon}_n$. By Lemma \ref{lem:str-con} the measures $\chi^\varepsilon_{2n_0}$ and $\chi^\varepsilon_{2n_0+1}$ are strongly continuous with respect to $\varepsilon$. In particular, it follows that the medians of $\chi^\varepsilon$ converge to the medians of $\nu$. Namely, $\lim_{\varepsilon\rightarrow 0} x_{2n_0}^{med,\varepsilon}=x_{2n_0}$ and $\lim_{\varepsilon\rightarrow 0} y_{2n_0+1}^{med,\varepsilon}=y_{2n_0+1}$. Therefore, there exists $\varepsilon'$ such that for every $\varepsilon<\varepsilon'$ holds
\begin{align}\label{eq:xmed-xn}
|x_{2n_0}^{med,\varepsilon} - x_{2n_0}|\leq \frac{\delta}{3\lambda} \text{ and } |y_{2n_0+1}^{med,\varepsilon} - y_{2n_0+1}|\leq \frac{\delta}{3\lambda}.
\end{align}

Note that $F^{1,\varepsilon}_{2n_0}(x)$ and $F^{2,\varepsilon}_{2n_0+1}(y)$ are continuous with respect to $x$ and $y$ (respectively). Note also that $\lim_{\varepsilon \rightarrow 0} b^\varepsilon =\frac{1}{2}$. Therefore by the definition of $x^{\varepsilon}_n$ and $y^\varepsilon_n$ there exists $\varepsilon''$ such that for every $\varepsilon<\varepsilon''$ holds
\begin{align}\label{eq:xneps-xmed}
|x^\varepsilon_{2n_0} - x_{2n_0}^{med,\varepsilon}|\leq \frac{\delta}{3\lambda} \text{ and } |y^\varepsilon_{2n_0+1} - y_{2n_0+1}^{med,\varepsilon}|\leq \frac{\delta}{3\lambda}.
\end{align}

If we set $\varepsilon(\delta)=\min(\varepsilon',\varepsilon'')$ then inequalities \eqref{eq:xn-x}, \eqref{eq:xmed-xn}, and \eqref{eq:xneps-xmed} guarantee that 
\begin{align*}
||(x_{2n_0},y_{2n_0+1})-(x,y)||_\infty \leq \frac{\delta}{\lambda}.
\end{align*}
By properties (1) and (3) in Proposition \ref{pro:first-part} (at the beginning of the proof) we know that $(x^\varepsilon,y^\varepsilon)$ Pareto dominates the point $(x_{2n_0},y_{2n_0+1})$. In addition, by Proposition \ref{pro:sym} we know that $(x,y)$ is Pareto efficient. By the definition of $\lambda$ together with  the above three properties we get that $||(x^\varepsilon,y^\varepsilon)-(x,y)||_\infty \leq \delta$.
\end{proof}

\begin{lemma}\label{lem:str-con}
For every constant $n_0$ the measure $\chi(\varepsilon)=\val(H^\varepsilon_{n_0})$ is strongly continuous (in the total variation distance) with respect to $\varepsilon$ around $\varepsilon=0$.
\end{lemma}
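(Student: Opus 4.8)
The plan is to exploit the fact that, for a fixed depth $n_0$, the hybrid game has only finitely many internal nodes and that the randomness in the controlling-player assignment is independent of the randomness in the payoffs. First I would enumerate the internal nodes of the complete binary tree of depth $n_0$ and let $\sigma$ range over the finitely many deterministic assignments of the two players to these nodes. Conditioning on a realized assignment $\sigma$, the value of the game is obtained by a fixed backward-induction rule applied to the i.i.d.\ payoffs drawn from $\mu_0$ (this value is well defined almost surely, since $\mu_0$ is non-atomic). Hence the conditional law of $\val(H^\varepsilon_{n_0})$ given $\sigma$ is a probability measure $\chi_\sigma$ that depends only on $\sigma$ and $\mu_0$, and in particular \emph{does not depend on $\varepsilon$}. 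This decoupling is the conceptual heart of the argument: all of the $\varepsilon$-dependence is pushed into the mixture weights.

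Next I would record that the probability $p_\sigma(\varepsilon)$ of a given assignment $\sigma$ is a product over the internal nodes, where each factor equals $\frac12+\varepsilon$ or $\frac12-\varepsilon$ according to whether $\sigma$ matches the ``favored'' player at that node (which is determined by the parity of the node's depth). Consequently each $p_\sigma$ is a polynomial in $\varepsilon$, hence continuous at $\varepsilon=0$. Writing $\chi(\varepsilon)=\sum_\sigma p_\sigma(\varepsilon)\,\chi_\sigma$ as a finite mixture and using the definition of the total variation distance as a supremum over measurable events, for any event $A$ we have $|\chi(\varepsilon)(A)-\chi(0)(A)|\le\sum_\sigma|p_\sigma(\varepsilon)-p_\sigma(0)|\,\chi_\sigma(A)$, and since $\chi_\sigma(A)\le 1$ this gives
\[
d_{TV}\big(\chi(\varepsilon),\chi(0)\big)\le \sum_\sigma |p_\sigma(\varepsilon)-p_\sigma(0)|.
\]

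Finally, since the sum ranges over a finite index set and each $p_\sigma$ is continuous, the right-hand side tends to $0$ as $\varepsilon\to 0$, which establishes strong continuity at $\varepsilon=0$ (the same argument in fact yields continuity at every $\varepsilon$). I do not anticipate a genuine obstacle here; the only point that requires care is the decoupling in the first step, where one must be explicit that the assignment randomness and the payoff randomness are independent, so that $\chi_\sigma$ carries no $\varepsilon$-dependence. After that the conclusion follows purely from the finiteness of the mixture and the continuity (indeed polynomiality) of the weights, and the only bookkeeping item is to fix a convention for the total variation distance so that the mixture bound holds with the stated constant.
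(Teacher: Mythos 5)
Your proof is correct, but it takes a different route from the paper's. The paper constructs an explicit coupling: it generates $\val(H^\varepsilon_{n_0})$ by first running the symmetric assignment of $R_{n_0}$ and then, independently at each internal node, overriding the controlling player with a Bernoulli flag of parameter $2\varepsilon$ (toward Player 1 at odd depths and Player 2 at even depths). On the event that no flag fires --- which has probability $(1-2\varepsilon)^{2^{n_0}-1}$ --- the two games coincide exactly, so $d_{TV}(\val(R_{n_0}),\val(H^\varepsilon_{n_0}))\le 1-(1-2\varepsilon)^{2^{n_0}-1}\to 0$. You instead decompose $\chi(\varepsilon)$ as a finite mixture $\sum_\sigma p_\sigma(\varepsilon)\,\chi_\sigma$ over deterministic assignments, observe that the components $\chi_\sigma$ carry no $\varepsilon$-dependence (this is the correct key point, and it does require the independence of the assignment and payoff randomness, which you flag), and bound the total variation distance by the $\ell_1$ distance between the polynomial weight vectors. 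Both arguments rest on the same two facts --- finitely many internal nodes and independence of the two sources of randomness --- but the paper's coupling yields a clean closed-form quantitative bound with essentially no bookkeeping, while your mixture decomposition is slightly more explicit about the structure of the measure and, as you note, gives continuity at every $\varepsilon$ rather than only at $\varepsilon=0$ (which is all the lemma needs). Either proof is acceptable.
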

\begin{proof}
The random variable $\val(H^\varepsilon_{n_0})$ can be obtained by the following procedure. 
\begin{enumerate}
\item We randomize $2^{n_0}$ i.i.d.\ random outcomes according to $\nu=\mu_0$.
\item For each node $v$ we choose Player $i=1,2$ to control node $v$ with probability $\frac{1}{2}$.
\item For each node $v$ we draw a Bernoulli random variable $e_v=0,1$ with $Pr(e_v=1)=2\varepsilon$.
\item For each odd-depth node $v$, if $e_v=1$ we set Player 1 to control the node (irrespective of who was chosen to control it at step (2)).
\item For each even-depth node $v$, if $e_v=1$ we set Player 2 to control the node (irrespective of who was chosen to control it at step (2)).
\item We apply backward induction on the resulted game.
\end{enumerate}
%Note that when $\varepsilon \rightarrow 0$ the probability of having a node $v$ where $e_v=1$ goes to 0 (more precisely, this probability is equal to 
Note that $Pr[e_v=0 \text{ for all } v]= (1-2\varepsilon)^{2^{n_0}-1}$. In such a case (of $e_v=0$ for all $v$) we have exactly the random variable $\val( R_{n_0})$. Therefore the total variation distance is bounded by $d_{TV}(\val(R_{n_0}),\val(H^\varepsilon_{n_0}))\leq 1-(1-2\varepsilon)^{2^{n_0}-1}$, i.e., $\lim_{\varepsilon \rightarrow 0} d_{TV}(\val(R_{n_0}),\val(R^\varepsilon_{n_0}))=0$.

\end{proof}

Now Theorem \ref{theo:bar} follows immediately from Propositions \ref{pro:first-part} and \ref{pro:second-part}.
\begin{proof}[Proof of Theorem \ref{theo:bar}]
The first part of the Theorem is proved in Proposition \ref{pro:first-part} items (3) and (4). Regarding the second part, if $C$ is symmetric then $x_n=y_n$ for every $n$, and therefore obviously $x=y$. So Proposition \ref{pro:second-part} completes the proof.
\end{proof}
\section{Discussion}\label{sec:discussion}
\textbf{1. General settings for which all the results hold.} By looking on the proofs of the results one can observe that the properties of the domain $C$ which were used in the proofs are as follows:
\begin{enumerate}
\item $C$ is a compact set with a non-empty interior.
\item For every not Pareto efficient point $(x,y)\in C$ the set $C_{\geq \geq}(x,y)$ has a non-empty interior.
\item The Pareto efficient boundary forms a connected path which has the Lipschitz property. This requirement is needed only for the analysis of the hybrid game.
\end{enumerate}
These are very minimalistic requirements which include in particular the cases where $C$ is a simply connected set with a concave Pareto efficient boundary (rather than convex).

Again by looking at the proofs of the results one can observe that the only property on the initial distribution $\mu_0$ that was used is the fact that $\mu_0$ is a non-atomic measure with strictly positive density over all $C$. The only change in the results (but not in the proofs) that is caused by changing the uniform distribution by a general distribution $\mu_0$ is Theorem \ref{th:zs}: The concentration point is obviously not the point $b\approx 0.62$ but the point $c$ for which $\mu_0([0,c])=b$.

\textbf{2. Games over ternary trees.} The paper focuses on the case where the game tree is a complete binary tree. The techniques derived in the paper can be applied also for the case where the game tree is ternary. We overview here the results for this case (without proofs). As we can see part of the results are similar but part are not:
\begin{itemize}
\item The value of the alternating game concentrates around a Pareto efficient point. In the case of a zero-sum game the concentration is around $b=0.68$, where $b$ is the unique solution of $1-(1-x^3)^3=x$ in the segment $(0,1)$. These results are similar to the case of binary trees.
\item The value of the random controlling player game converges to the distribution $\tau$ which assigns a probability of $\frac{1}{2}$ to both points $m_1,m_2\in C$, where $m_i$ is the best point in $C$ for Player $i$. This statement holds for both, the zero sum case and the case of a general two-dimensional set $C$. These results \emph{defer} from the binary case where the limit measure has positive density over the entire Pareto efficient boundary.
\item For small enough values of $\varepsilon$, the value of the hybrid game converges to $\tau^\varepsilon$, where $\tau^\varepsilon$ assigns a probability $\varepsilon$ close to $\frac{1}{2}$ to $m_1$ (from the previous bullet) and the remaining probability to $m_2$. In particular, the hybrid game fails to have the desirable property of concentration around a point. Namely, a different random structure than the hybrid game is required in order to implement a standard solution in ternary games.
\end{itemize}

\end{document}